\newcommand{\Oh}[1]{\ensuremath{\mathcal{O}\!\left({#1}\right)}}
\newcommand{\sample}{\ensuremath{{\sf sample}}}
\newcommand{\rank}{\ensuremath{{\sf Rank}}}
\newcommand{\myparagraph}[1]{\vspace*{2mm}\noindent{\bf #1.}}
\newcommand{\Sell}{\ensuremath{S_\ell}}
\renewcommand{\lg}{\log}
\title{An Encoding for Order-Preserving Matching}
\author[1]{Travis Gagie}
\author[2]{Giovanni Manzini}
\author[3]{Rossano Venturini}
\affil[1]{School of Computer Science and Telecommunications, Diego Portales University and CEBIB, Santiago, Chile\\
  \texttt{travis.gagie@mail.udp.cl}}
\affil[2]{Computer Science Institute, University of Eastern Piedmont, Alessandria, Italy and IIT-CNR, Pisa, Italy\\
  \texttt{giovanni.manzini@uniupo.it}}
\affil[3]{Department of Computer Science, University of Pisa, Pisa, Italy and ISTI-CNR, Pisa, Italy\\
  \texttt{rossano.venturini@unipi.it}}
\authorrunning{T. Gagie, G. Manzini and R. Venturini}
\subjclass{E.1 Data Structures; F.2.2 Nonnumerical Algorithms and Problems; H.3 Information Storage and Retrieval.}
\keywords{Compact data structures; encodings; order-preserving matching.}
\begin{document}

\maketitle

\begin{abstract}
Encoding data structures store enough information to answer the queries they
are meant to support but not enough to recover their underlying datasets.  In
this paper we give the first encoding data structure for the challenging
problem of order-preserving pattern matching. This problem was introduced
only a few years ago but has already attracted significant attention because
of its applications in data analysis. Two strings are said to be an
order-preserving match if the {\em relative order} of their characters is the
same: e.g., \(4, 1, 3, 2\) and \(10, 3, 7, 5\) are an order-preserving match.
We show how, given a string \(S [1..n]\) over an arbitrary alphabet and a
constant \(c \geq 1\), we can build an $\Oh{n \log \log n}$-bit encoding such
that later, given a pattern \(P [1..m]\) with \(m \leq \lg^c n\), we can
return the number of order-preserving occurrences of $P$ in $S$ in $\Oh{m}$
time. Within the same time bound we can also return the starting position of
some order-preserving match for $P$ in $S$ (if such a match exists). We prove
that our space bound is within a constant factor of optimal; our query time
is optimal if $\log \sigma = \Omega(\log n)$. Our space bound contrasts with
the \(\Omega (n \log n)\) bits needed in the worst case to store $S$ itself,
an index for order-preserving pattern matching with no restrictions on the pattern length,
or an index for standard pattern matching even with restrictions
on the pattern length.  Moreover, we can build our encoding knowing only how
each character compares to $\Oh{\lg^c n}$ neighbouring characters.
\end{abstract}

\section{Introduction}
\label{sec:introduction}

As datasets have grown even faster than computer memories, researchers have
designed increasingly space-efficient data structures.  We can now store a
sequence of $n$ numbers from \(\{1, \ldots, \sigma\}\) with \(\sigma \leq n\)
in about $n$ words, and sometimes \(n \lg \sigma\) bits, and sometimes even
\(n H\) bits, where $H$ is the empirical entropy of the sequence, and still
support many powerful queries quickly.  If we are interested only in queries
of the form ``what is the position of the smallest number between the $i$th
and $j$th?'', however, we can do even better: regardless of $\sigma$ or $H$,
we need store only \(2 n + o (n)\) bits to be able to answer in constant
time~\cite{fischer2007new}.  Such a data structure, that stores enough
information to answer the queries it is meant to support but not enough to
recover the underlying dataset, is called an {\em
encoding}~\cite{raman2015encoding}.  As well as the variant of range-minimum
queries mentioned above, there are now efficient encoding data structures for
range top-$k$~\cite{DNRS14,GN15,GINRS13}, range selection~\cite{NRS14}, range
majority~\cite{NT14}, range maximum-segment-sum~\cite{GN15b} and range
nearest-larger-value~\cite{Fis11} on sequences of numbers, and
range-minimum~\cite{GIKRR11} and range
nearest-larger-value~\cite{jayapaul2016space,jo2015compact} on
two-dimensional arrays of numbers; all of these queries return positions but
not values from the sequence or array.
Perhaps Orlandi and Venturini's~\cite{orlandi2016space} results about sublinear-sized data structures for
substring occurrence estimation are the closest to the ones we present in this paper, in that they are more related to pattern matching than range queries: they showed how we can store a
sequence of $n$ numbers from \(\{1, \ldots, \sigma\}\) in significantly less
than \(n \lg \sigma\) bits but such that we can estimate quickly and well how often any
pattern occurs in the sequence.

Encoding data structures can offer better space bounds than traditional data
structures that store the underlying dataset somehow (even in succinct or
compressed form), and possibly even security guarantees: if we can build an
encoding data structure using only public information, then we need not worry
about it being reverse-engineered to reveal private information. From the
theoretical point of view, encoding data structures pose new interesting
combinatorial problems and promise to be a challenging field for future
research.

In this paper we give the first encoding for {\em order-preserving pattern
matching}, which asks us to search in a text for substrings whose characters
have the same relative order as those in a pattern.  For example, in \(6, 3,
9, 2, 7, 5, 4, 8, 1\), the order-preserving matches of \(2, 1, 3\) are \(6,
3, 9\) and \(5, 4, 8\).  Kubica et al.~\cite{ipl/KubicaKRRW13} and Kim et
al.~\cite{tcs/KimEFHIPPT14} formally introduced this problem and gave
efficient online algorithms for it.  Other researchers have continued their
investigation, and we briefly survey their results in
Section~\ref{sec:previous}. As well as its theoretical interest, this problem
has practical applications in data analysis. For example, mining for
correlations in large datasets is complicated by amplification or damping ---
e.g., the euro fluctuating against the dollar may cause the pound to
fluctuate similarly a few days later, but to a greater or lesser extent ---
and if we search only for sequences of values that rise or fall by exactly
the same amount at each step we are likely to miss many potentially
interesting leads. In such settings, searching for sequences in which only
the relative order of the values is constrained to be the same is certainly
more robust.

In Section~\ref{sec:previous} we review some previous work on
order-preserving pattern matching.  In Section~\ref{sec:background} we review
the algorithmic tools we use in the rest of the paper. In
Section~\ref{sec:scanning} we prove our first result showing how, given a
string \(S [1..n]\) over an arbitrary alphabet $[\sigma]$ and a constant \(c
\geq 1\), we can store $\Oh{n \lg \lg n}$ bits --- regardless of $\sigma$ ---
such that later, given a pattern \(P [1..m]\) with \(m < \lg^c n\), in $\Oh{n
\log^c n}$ time we can scan our encoding and report all the order-preserving
matches of $P$ in $S$. Our space bound contrasts with the \(\Omega (n \log
n)\) bits needed in the worst case, when \(\log \sigma = \Omega (\log n)\), to store $S$ itself, an index for order-preserving pattern matching with no restriction on the pattern length, or an index for
standard pattern matching even with restrictions on the pattern length.
(If $S$ is a permutation then we can recover it from an index for unrestricted order-preserving pattern matching,
or from an index for standard matching of patterns of length 2, even when they
do not report the positions of the matches.  Notice this does not contradict Orlandi and Venturini's
result, mentioned above, about estimating substring frequency, since that
permits additive error.)  In fact, we build our representation of $S$ knowing only how each character
compares to \(2 \lg^c n\) neighbouring characters. We show in
Section~\ref{sec:indexing} how to adapt and build on this representation to
obtain indexed order-preserving pattern matching, instead of scan-based,
allowing queries in $\Oh{m \log^3 n}$ time but now reporting the position of
only one match.

In Section~\ref{sec:faster} we give our main result showing how to speed up
our index using weak prefix search and other algorithmic improvements. The
final index is able to {\em count} the number of occurrences and {\em return
the position} of an order-preserving match (if one exists) in
$\Oh{m}$ time. This query time is optimal if $\log \sigma = \Omega(\log n)$.
Finally, in Section~\ref{sec:lowerbound} we show that our space bound is
optimal (up to constant factors) even for data structures that only return
whether or not $S$ contains any order-preserving matches.

\section{Previous Work}
\label{sec:previous}

Although recently introduced, order-preserving pattern matching has received
considerable attention and  has been studied in different settings. For the
online problem, where the pattern is given in advance, the first
contributions were inspired by the classical Knuth-Morris-Pratt and
Boyer-Moore
algorithms~\cite{isaac/BelazzouguiPRV13,ipl/ChoNPS15,tcs/KimEFHIPPT14,ipl/KubicaKRRW13}.
The proposed algorithms have guaranteed linear time worst-case complexity or
sublinear time average complexity. However, for the online problem the best
results in practice are obtained by algorithms based on the concept of
filtration, in which some sort of ``order-preserving'' fingerprint is applied
to the text and the
pattern~\cite{stringology/CantoneFK15,spe/TSOT16,spire/ChhabraGT15,wea/ChhabraT14,ipl/ChhabraT16,FaroK16,DCCpaper}.
This approach was successfully applied also to the harder problem of matching
with errors~\cite{spire/ChhabraGT15,cpm/GawrychowskiU14,wea/HirvolaT14}.

There has also been work on indexed order-preserving pattern matching.
Crochemore et al.~\cite{tcs/CrochemoreIKKLP16} showed how, given a string \(S
[1..n]\), in $\Oh{n \log (n) / \log \log n}$ time we can build an $\Oh{n \log
n}$-bit index such that later, given a pattern \(P [1..m]\), we can return
the starting positions of all the $\mathsf{occ}$ order-preserving matches of
$P$ in $S$ in optimal $\Oh{m + \mathsf{occ}}$ time.  Their index is a kind of
suffix tree, and other researchers~\cite{Sha16personal} are trying to reduce
the space bound to \(n \lg \sigma + o (n \log \sigma)\) bits, where $\sigma$
is the size of the alphabet of $S$, by using a kind of Burrow-Wheeler
Transform instead (similar to~\cite{ganguly2017pbwt}).
Even if they succeed, however, when \(\sigma = n^{\Omega
(1)}\) the resulting index will still take linear space --- i.e., \(\Omega
(n)\) words or \(\Omega (n \log n)\) bits.

In addition to Crochemore et al.'s result, other offline solutions have been
proposed combining the idea of fingerprint and indexing. Chhabra et
al.~\cite{stringology/ChhabraKT15} showed how to speed up the search by
building an FM-index~\cite{isci/FerraginaM01} on the binary string expressing
whether in the input text each element is smaller or larger than the next
one. By expanding this approach, Decaroli et al.~\cite{DCCpaper} show how to
build a compressed file format supporting order-preserving matching without
the need of full decompression. Experiments show that this compressed file
format takes roughly the same space as {\sf gzip} and that in most cases the
search is orders of magnitude faster than the sequential scan of the text. We
point out that these approaches, although interesting for the applications,
do not have competitive worst case bounds on the search cost as we get from
Crochemore et al.'s and in this paper.

\section{Background} \label{sec:background}

In this section we collect a set of algorithmic tools that will be used in
our solutions. In the following we report each result together with a brief
description of the solved problem. More details can be obtained by consulting
the corresponding references. All the results hold in the unit cost word-RAM
model, where each memory word has size $w=\Omega(\log n)$ bits, where $n$ is
the input size. In this model arithmetic and boolean operations between
memory words require $\Oh{1}$ time.

\myparagraph{\rank{} queries on binary vector}
In the next solutions we will need to support $\rank$ queries on a binary vector $B[1..n]$.
Given an index $i$, $\rank(i)$ on $B$ returns the number of $1$s in the prefix $B[1..i]$.
We report here a result in \cite{Jacobson}.

\begin{theorem}\label{thn:rank}
 Given a binary vector $B[1..n]$, we can support $\rank$ queries in constant time
 by using $n+o(n)$ bits of space.
\end{theorem}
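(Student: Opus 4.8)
The plan is to build a small directory on top of a verbatim copy of $B$, which already costs $n$ bits; the directory will add only $o(n)$ bits. First I would cut $B[1..n]$ into \emph{superblocks} of length $\lceil \lg^2 n \rceil$ and, for each superblock, store the number of $1$s in $B$ strictly before it; there are $\Oh{n/\lg^2 n}$ such counts, each fitting in $\lceil\lg n\rceil$ bits, for a total of $\Oh{n/\lg n}=o(n)$ bits. Next I would cut each superblock into \emph{blocks} of length $\lceil (\lg n)/2 \rceil$ and, for each block, store the number of $1$s from the start of its superblock up to the start of that block; each such count is less than $\lg^2 n$, so it fits in $\Oh{\lg\lg n}$ bits, and there are $\Oh{n/\lg n}$ of them, for a total of $\Oh{(n\lg\lg n)/\lg n}=o(n)$ bits.

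Finally, to resolve ranks inside a single block I would precompute a lookup table indexed by a block content (a bit string of length $\lceil(\lg n)/2\rceil$, hence one of at most $\Oh{\sqrt n}$ possibilities) and by an offset within the block (one of $\Oh{\lg n}$ possibilities), whose entry is the number of $1$s in the corresponding prefix of that block; each entry fits in $\Oh{\lg\lg n}$ bits, so the whole table occupies $\Oh{\sqrt n \,\lg n \,\lg\lg n}=o(n)$ bits. To answer $\rank(i)$, I would express $i$ in terms of its superblock, its block within that superblock, and its offset within that block --- all via integer division and remainder, which are $\Oh{1}$ word-RAM operations --- and return the sum of the stored superblock count, the stored block count, and the table entry obtained after reading the $\lceil(\lg n)/2\rceil$-bit block content in $\Oh{1}$ time (it spans $\Oh{1}$ machine words since $w=\Omega(\lg n)$). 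This is a constant number of memory accesses and additions, so each query runs in $\Oh{1}$ time.

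The main things to get right are the bookkeeping at the boundaries --- the last superblock and last block may be short, and $i$ need not align with any block boundary --- and the simultaneous verification that every auxiliary array is $o(n)$ bits while still being randomly addressable in $\Oh{1}$ time; the latter is exactly what the choices ``superblock of size $\Theta(\lg^2 n)$, block of size $\Theta(\lg n)$, table keyed on half-log-length blocks'' are engineered to guarantee. No single step is deep; the only real care is in confirming these size bounds hold together.
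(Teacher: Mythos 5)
Your construction is correct and is exactly the classical two-level directory (superblocks of size $\Theta(\lg^2 n)$, blocks of size $\Theta(\lg n)$, plus a four-Russians lookup table on half-word chunks) that underlies the result the paper simply cites from Jacobson without proof. All the space accountings check out, so there is nothing further to fix.
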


\myparagraph{Elias-Fano representation}
In the following we will need to encode an increasing sequence of values in almost
optimal space. There are several solutions to this problem, we report here the result
obtained with the, so-called, Elias-Fano representation \cite{Elias74,Fano71}.

\begin{theorem}\label{thn:ef}
An increasing sequence of $n$ values up to $u$ can be represented by using $\log
{u \choose n}+O(n) = n \log\frac{u}{n} + O(n)$ bits, so that we can access any
value of the sequence in constant time.
\end{theorem}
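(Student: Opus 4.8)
The plan is to use the standard two-level decomposition underlying the Elias--Fano representation. Write the sequence as $x_1 < x_2 < \cdots < x_n$ with each $x_i \in \{1,\dots,u\}$; since the values are strictly increasing, $u \ge n$. Set $\ell = \lfloor \lg(u/n) \rfloor \ge 0$ and split each $x_i$ into its $\ell$ low bits $\mathit{lo}_i = x_i \bmod 2^\ell$ and its high part $\mathit{hi}_i = \lfloor x_i / 2^\ell \rfloor$. First I would store the low parts verbatim, concatenated in a bit array $L$ so that $\mathit{lo}_i$ occupies $L[(i-1)\ell+1..i\ell]$; this costs $n\ell \le n \lg(u/n)$ bits and any $\mathit{lo}_i$ is retrieved in $\Oh{1}$ time by a single array access. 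The high parts form a non-decreasing sequence over $\{0,\dots,\lfloor u/2^\ell\rfloor\}$, and since $2^\ell > u/(2n)$ this range has at most $2n$ values. I would encode the high parts in a bit vector $H$ built by writing, for $v = 0, 1, \dots, \lfloor u/2^\ell \rfloor$ in turn, a run of $|\{\, i : \mathit{hi}_i = v \,\}|$ ones terminated by a single $0$; then $H$ has exactly $n$ ones and at most $2n$ zeros, so $|H| \le 3n$.

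Next I would augment $H$ with a structure supporting $\select_1$ queries in constant time, where $\select_1(H, j)$ returns the position of the $j$-th $1$ in $H$. To access $x_i$, note that the number of $0$s preceding the $i$-th $1$ of $H$ equals $\mathit{hi}_i$, so $\mathit{hi}_i = \select_1(H, i) - i$; we then read $\mathit{lo}_i$ from $L$ and return $\mathit{hi}_i \cdot 2^\ell + \mathit{lo}_i$, all in $\Oh{1}$ time. For the space, $L$ uses at most $n\lg(u/n)$ bits, $H$ uses $\Oh{n}$ bits, and the $\select_1$ index uses $o(n)$ further bits, for a total of $n\lg(u/n) + \Oh{n}$. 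Finally, from $(u/n)^n \le \binom{u}{n} \le (eu/n)^n$ we get $\lg\binom{u}{n} = n\lg(u/n) + \Theta(n)$, so this total can be written in either of the two claimed forms.

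The one ingredient that is not entirely routine is the constant-time $\select_1$ structure on $H$: unlike $\rank$ (Theorem~\ref{thn:rank}), $\select$ does not come for free from a rank index and needs its own construction. I would supply it with the well-known two-level sampling scheme for $\select$ on a bit vector of length $N = \Oh{n}$: sample roughly every $(\lg^2 N)$-th $1$ and store its position exactly, using $\Oh{N/\lg N}$ bits; inside a block between consecutive sampled ones, either store all the positions explicitly when the block is spread over a wide range, or, when the block spans only a polylogarithmic number of positions, answer the query by a lookup into a shared precomputed table of size $o(N)$. This yields $\Oh{1}$ query time with $o(n)$ extra bits; and since the theorem tolerates an $\Oh{n}$ additive overhead anyway, even cruder sampling variants would suffice. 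Everything else is routine bookkeeping, so this is the only place where real care is needed.
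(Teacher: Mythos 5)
Your construction is correct and is precisely the classical Elias--Fano two-level scheme (low bits stored verbatim, high parts in unary with constant-time \select{} support) that the paper invokes by name and by citation without reproducing a proof; the space accounting via $(u/n)^n \le \binom{u}{n} \le (eu/n)^n$ and the $\Oh{1}$ access via $\mathit{hi}_i = \select_1(H,i)-i$ are both right. Nothing is missing, and your remark that the only non-routine ingredient is the $o(n)$-bit constant-time \select{} index is an accurate assessment.
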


\myparagraph{Minimal perfect hash functions}
In our solution we will make use of {\em Minimal perfect hash function}s (Mphf) \cite{HT01} and
{\em Monotone minimal perfect hash function}s (Mmphf) \cite{BelazzouguiBPV09}.

Given a subset of $\mathit{S} =\{x_1, x_2, \ldots, x_n\}\subseteq \mathit{U}$
of size n, a minimal perfect hash function has to injectively map keys in $\mathit{S}$
to the integers in $[n]$.
Hagerup and Tholey \cite{HT01} show how to build a space/time optimal minimal perfect hash function as stated by the following theorem.

\begin{theorem} \label{thn:mphf}
Given a subset of $\mathit{S} \subseteq \mathit{U}$ of size $n$, there is
a minimal perfect hash function for $S$ that can be evaluated in
constant time and requires $n\log e +o(n)$ bits of space.
\end{theorem}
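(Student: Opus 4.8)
Since this theorem is due to Hagerup and Tholey, the plan is to reconstruct their construction; its one-line summary is: \emph{chop $S$ into many tiny buckets, store for each bucket a compact index pointing to a perfect hash function found by exhaustive search over a suitable family, and pack those indices tightly together.} First I would do a \emph{universe reduction}: find a hash function $g\colon U\to[n^{2+\epsilon}]$ of the form $x\mapsto(ax\bmod p)\bmod r$, with $p$ a prime exceeding $|U|$ and $r=\Theta(n^{2+\epsilon})$, that is injective on $S$ (such an $a$ exists by a standard averaging argument), store $g$ in $O(\log|U|)=o(n)$ bits (assuming as usual that $\log|U|=n^{o(1)}$; otherwise compose first with a coarse fingerprint), and from now on assume $U=[n^{O(1)}]$.

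Next, fix a bucket-size parameter $\ell=\omega(1)$ to be pinned down later (think $\ell=\Theta(\log n/\log\log n)$) and a first-level hash $h_0\colon[n^{O(1)}]\to[m]$ with $m=\lceil n/\ell\rceil$, chosen so that every bucket $B_i=\{x\in S: h_0(g(x))=i\}$ has $|B_i|=\Theta(\ell)$; a Chernoff bound shows a random $h_0$ works with constant probability, and since only the final encoding size is charged we may simply resample. Assign the buckets pairwise-disjoint sub-intervals of $[n]$ of lengths $|B_1|,\dots,|B_m|$, so they tile $[n]$, and record the $m$ interval endpoints with an Elias--Fano structure (Theorem~\ref{thn:ef}); this costs $\Oh{\frac{n}{\ell}\log\ell}=o(n)$ bits and gives $O(1)$ access. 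For each bucket with $k_i=|B_i|$ keys, search a family $\mathcal{F}_{k_i}$ of functions $[n^{O(1)}]\to[k_i]$ for one injective on $B_i$; by Stirling a ``random'' such function is injective on a given $k_i$-set with probability $k_i!/k_i^{k_i}=e^{-k_i+O(\log k_i)}$, so a family of size $\Theta(e^{k_i}\cdot\mathrm{poly}(k_i))$ contains one and its index fits in $k_i\log e+O(\log k_i)$ bits. Concatenate these indices and build a second Elias--Fano structure over the prefix sums of their bit-lengths, again $o(n)$ bits with $O(1)$ access. Then the MPHF evaluates as: $x\mapsto g(x)\mapsto i=h_0(g(x))$, read bucket $i$'s function index and interval offset, apply the function, add the offset --- all $O(1)$.

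The total space is
\[
\sum_{i=1}^{m}\bigl(k_i\log e+O(\log k_i)\bigr)+o(n)=n\log e+\Oh{\frac{n}{\ell}\log\ell}+o(n)=n\log e+o(n),
\]
which is essentially optimal: when $|U|=\omega(n)$ a counting argument ($2^s\cdot(u/n)^n\ge\binom{u}{n}$, where $s$ is the number of bits used, since each fixed $s$-bit string is a valid MPHF for at most $(u/n)^n$ of the $\binom{u}{n}$ possible key sets) forces $s\ge n\log e-O(\log n)$.

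The part I expect to be the main obstacle is engineering the per-bucket family $\mathcal{F}_k$ so that it is simultaneously (i) large enough and ``random enough'' that an injective member provably exists for \emph{every} bucket, (ii) describable by an index of only $k\log e+O(\log k)$ bits --- in particular \emph{without} a per-bucket displacement table, since $\sum_i\Theta(k_i\log k_i)=\Theta(n\log\ell)$ would blow the budget --- and (iii) evaluable in $O(1)$ time from that index; plain multiply--shift families are either too small or too dependent, so one needs a more structured scheme (in Hagerup--Tholey, combining a few crude hashes drawn from shared randomness, or recursing the whole construction on each bucket). Tied to this is a delicate balancing act that forces $\ell$ into a narrow window: $\ell$ must be large enough that the $m=n/\ell$ directory entries cost only $o(n)$ bits, yet small enough that bucket sizes concentrate and the $e^{k_i}$-sized search stays polynomial-time with its index within the $\log e$ budget. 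Making \emph{all} the auxiliary pieces --- universe reduction, $h_0$, the two Elias--Fano directories, and any overflow handling --- fit into $o(n)$ bits at once is where the real work lies; the leading $n\log e$ term itself drops out cleanly from Stirling's formula.
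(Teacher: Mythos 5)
This is not a statement the paper proves: Theorem~\ref{thn:mphf} is imported verbatim as a background tool from Hagerup and Tholey \cite{HT01}, and the paper's ``proof'' is the citation itself. So the only meaningful comparison is with the known construction, and there your reconstruction gets the architecture right: universe reduction to $n^{O(1)}$, bucketing into groups of size roughly $\log n/\log\log n$, a per-bucket index of $k\log e+O(\log k)$ bits into a family of $\Theta(e^k\,\mathrm{poly}(k))$ candidate functions, $o(n)$-bit Elias--Fano directories for offsets and prefix sums, and the counting lower bound $2^s\cdot(u/n)^n\ge\binom{u}{n}$ showing the constant $\log e$ is forced. That is a faithful outline of \cite{HT01}.

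However, the step you yourself flag as the obstacle is the entire content of the theorem, and it is not filled in, so as a standalone proof this has a genuine gap. Two concrete failure points. First, existence for \emph{every} bucket: a fixed family of size $e^k\,\mathrm{poly}(k)$ whose members are injective on a given $k$-set with probability $\approx k!/k^k$ fails a union bound over the $n/\ell$ buckets precisely when $k=O(1)$, since the per-bucket failure probability is then a constant; small buckets cannot be handled by ``resample the family,'' because resampling for one bucket perturbs the others unless the candidate functions for different buckets are drawn independently, and independent per-bucket families cannot be stored in $o(n)$ bits. Hagerup and Tholey escape this by aggregating consecutive buckets into fragments of $\Theta(\log n)$ keys that share lookup tables and by drawing candidates from a single shared random sequence, charging each bucket the index of its \emph{first} success --- a geometrically distributed quantity whose logarithm is $k\log e+O(1)$ in expectation --- and bounding the \emph{sum} of these index lengths by concentration, resampling the whole structure if the global budget is exceeded. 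Second, $O(1)$ evaluation: you correctly note that multiply--shift families are too weak, but the resolution (tabulation over $o(\log n)$-bit chunks, made affordable exactly because bucket/fragment sizes are $O(\log n/\log\log n)$ so the tables fit in $o(n)$ bits) is the other half of the argument and is absent. In short: correct skeleton, honest about where the difficulty lies, but the difficulty is the theorem.
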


A monotone minimal perfect hash function is a Mphf $h()$ that preserves the lexicographic ordering,
i.e., for any two strings $x$ and $y$ in the set, $x \leq y$ if and only if $h(x) \leq h(y)$.
Results on Mmphfs focus their attention on dictionaries of binary strings \cite{BelazzouguiBPV09}.
The results can be easily generalized to dictionaries with strings over larger alphabets.
The following theorem reports the obvious generalization of Theorem 3.1 in
\cite{BelazzouguiBPV09} and Theorem 2 in \cite{WeakPrefix}.

\begin{theorem}\label{thn:monotone}
  Given a dictionary of $n$ strings drawn from the alphabet $[\sigma]$, there is
  a monotone minimal perfect hash function $h()$ that occupies $\Oh{n \log (\ell\log \sigma)}$ bits of space,
  where $\ell$ is the average length of the strings in the dictionary.
  Given a string $P[1..m]$, $h(P)$ is computed in $\Oh{1 + m\log \sigma/w}$ time.
\end{theorem}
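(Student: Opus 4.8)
The plan is to reduce the general-alphabet problem to the binary-string case covered by Theorem 3.1 of~\cite{BelazzouguiBPV09} and Theorem 2 of~\cite{WeakPrefix}, using a \emph{fixed-width} binary encoding of the characters. Set $k = \lceil \log \sigma \rceil$ and let $\mathrm{enc}\colon [\sigma]^* \to \{0,1\}^*$ be the map that replaces each character by its $k$-bit binary representation and concatenates the resulting blocks. First I would verify that $\mathrm{enc}$ is an injection on the dictionary and preserves lexicographic order: for two distinct dictionary strings $x$ and $y$, either one is a proper prefix of the other, in which case $\mathrm{enc}$ of the shorter is a proper prefix of $\mathrm{enc}$ of the longer, or $x$ and $y$ first differ at some position $i$, in which case $\mathrm{enc}(x)$ and $\mathrm{enc}(y)$ agree on their first $(i-1)k$ bits and then, since $x_i$ and $y_i$ are written with the same number of bits, the block of $\min(x_i, y_i)$ is lexicographically smaller than the block of $\max(x_i, y_i)$; either way the relative order of the encodings matches that of $x$ and $y$. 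This step is where the fixed width matters: a variable-length prefix code need not preserve lexicographic order.

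Next I would feed the set of all strings $\mathrm{enc}(x)$, for $x$ in the dictionary, to the binary-string construction. These are $n$ distinct binary strings of average length $\ell k = \Theta(\ell \log \sigma)$, so we obtain a monotone minimal perfect hash function $g$ on them occupying $\Oh{n \log(\ell k)} = \Oh{n \log(\ell \log \sigma)}$ bits (reading the bound, as usual, as $\Oh{n \log(2 + \ell \log \sigma)}$ so that it stays meaningful for tiny $\ell$ and $\sigma$). Define $h(P) = g(\mathrm{enc}(P))$. Since $\mathrm{enc}$ is an order-preserving injection on the dictionary and $g$ is a monotone bijection from the encoded dictionary onto $[n]$, the composition $h$ is a monotone bijection from the dictionary onto $[n]$, i.e.\ an Mmphf, and it uses no space beyond $g$ (the constant $k$ fits in one word).

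For the running time, assume $P[1..m]$ is given in the natural packed form over $[\sigma]$, namely $\lceil \log \sigma\rceil$ bits per character, which occupies $\Oh{1 + m \log \sigma / w}$ machine words; then $\mathrm{enc}(P)$ is exactly this representation, with nothing to compute. By the cited results, evaluating $g$ on a binary string takes time proportional to the number of words it spans, so $g(\mathrm{enc}(P)) = h(P)$ is computed in $\Oh{1 + m \log \sigma / w}$ time, as claimed.

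I expect the only genuine subtlety to be the order-preservation argument for $\mathrm{enc}$; everything else is bookkeeping. One secondary point worth a line: an encoded string can be a proper prefix of another, exactly when one dictionary string is a prefix of another, so we rely on the cited binary-string constructions handling such sets — and if one preferred to sidestep this, reserving $0^k$ as an end-of-string block and shifting every real character code up by one works, at the mild cost of requiring $\sigma + 1 \le 2^k$, i.e.\ taking $k = \lceil \log(\sigma+1) \rceil$, which is still $\Theta(\log \sigma)$.
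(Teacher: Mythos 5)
Your proposal is correct and fills in exactly the reduction the paper leaves implicit: the paper offers no proof, merely asserting that the theorem is ``the obvious generalization'' of the cited binary-string results, and the intended argument is precisely your fixed-width re-encoding of characters as $\lceil \log\sigma\rceil$-bit blocks, which preserves lexicographic order and turns the dictionary into $n$ binary strings of average length $\Theta(\ell\log\sigma)$. Your space and time accounting, and your handling of the prefix subtlety, are all sound.
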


\myparagraph{Weak prefix search}
The {\em Prefix Search Problem} is a well-known problem in data-structure
design for strings. It asks for the preprocessing of a given set of $n$ strings 
in such a way that, given a query-pattern $P$, (the lexicographic range of) all
the strings in the dictionary which have $P$ as a prefix can be returned
efficiently in time and space.

Belazzougui {\em et al.} \cite{WeakPrefix} introduced the weak variant of the problem that
allows for a one-sided error in the answer.
Indeed, in the {\em Weak Prefix Search Problem} the answer to a query
is required to be correct only in the case that $P$ is a prefix of at least one
string in dictionary; otherwise, the algorithm returns an arbitrary answer.

Due to these relaxed requirements, the data structures solving the problem are
allowed to use space sublinear in the total length of the indexed strings.
Belazzougui {\em et al.} \cite{WeakPrefix} focus their attention
on dictionaries of binary strings, but their results can be easily generalized
to dictionaries with strings over larger alphabets.
The following theorem states the obvious generalization of Theorem 5 in \cite{WeakPrefix}.

\begin{theorem}\label{thm:weak}
Given a dictionary of $n$ strings drawn from the alphabet $[\sigma]$,
there exists a data structure that weak prefix searches
for a pattern $P[1..m]$ in $\Oh{m\log \sigma/w+\log (m \log\sigma)}$ time.
The data structure uses $\Oh{n \log (\ell\log \sigma)}$ bits of space, where $\ell$ is the
average length of the strings in the dictionary.
\end{theorem}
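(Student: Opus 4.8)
The plan is to reduce the problem over the alphabet $[\sigma]$ to the binary case already solved as Theorem 5 of \cite{WeakPrefix}, by replacing every character with its fixed-width binary code. Concretely, set $b = \lceil \log \sigma \rceil$ and map each dictionary string $x = x[1..k]$ to the binary string $\hat{x}$ of length $bk$ formed by concatenating the $b$-bit big-endian representations of $x[1],\ldots,x[k]$. Two facts make this reduction work. First, the map is injective and order-preserving: if $x$ and $y$ first differ at position $i$ with $x[i] < y[i]$, then the $b$-bit blocks for the first $i-1$ positions coincide and the $i$th block of $\hat{x}$ is lexicographically smaller, so $\hat{x} < \hat{y}$ (and a shorter string maps to a binary prefix of the longer one's image); hence the sorted order of the dictionary — and therefore the lexicographic ranges that a prefix search must return — are literally the same integers for $\{x\}$ and $\{\hat{x}\}$. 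Second, since all codes have the same width $b$, a pattern $P[1..m]$ over $[\sigma]$ is a prefix of $x$ if and only if $\hat{P}$, which is a binary string of length exactly $bm$, is a prefix of $\hat{x}$: an aligned binary query can only match at character boundaries. This second fact is exactly what keeps the \emph{weak} reduction sound — whenever $P$ prefixes some dictionary string, $\hat{P}$ prefixes the corresponding $\hat{x}$, so the binary structure returns the correct range, which we output verbatim; whenever $P$ prefixes nothing, neither does $\hat{P}$, so an arbitrary answer from the binary structure is a legal arbitrary answer for us.

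Given the reduction, one simply builds the binary weak-prefix-search structure of \cite{WeakPrefix} on $\{\hat{x}\}$ and translates queries through the encoding. The average binary length is $\ell b = \Theta(\ell\log\sigma)$, so the binary structure's space bound of $\Oh{n\log(\text{average length})}$ bits becomes $\Oh{n\log(\ell\log\sigma)}$ bits. For a query we produce $\hat{P}$ on the fly while scanning $P$, packed into words in $\Oh{1 + m\log\sigma/w}$ time; $\hat{P}$ has length $bm = \Theta(m\log\sigma)$ bits, so the binary query time of $\Oh{(\text{bit-length of pattern})/w + \log(\text{bit-length of pattern})}$ becomes $\Oh{m\log\sigma/w + \log(m\log\sigma)}$, and the cost of building $\hat{P}$ is absorbed. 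Both bounds in the statement follow, noting that replacing $\lceil\log\sigma\rceil$ by $\log\sigma$ changes nothing inside the $\Oh{\cdot}$.

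These steps are essentially routine — indeed \cite{WeakPrefix} state the binary case precisely because the general one follows this way — so I do not expect a real obstacle; the only points needing care are the two correctness observations about fixed-width encoding (in particular verifying that ``weak'' answers are never silently converted into wrong ones) and the bookkeeping that $\hat{P}$ is always character-aligned. An alternative would be to re-derive the z-fast trie machinery of \cite{WeakPrefix} directly over $[\sigma]$, storing $\Oh{\log\sigma}$-bit symbols in place of bits; this gives the same bounds with strictly more work, so I would not take that route. The very same encoding argument, incidentally, yields the alphabet-generalized monotone minimal perfect hash function bound quoted earlier.
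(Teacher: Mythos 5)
Your reduction is correct and is exactly what the paper intends: the paper gives no proof at all, simply asserting that Theorem 5 of \cite{WeakPrefix} for binary strings ``can be easily generalized'' to $[\sigma]$, and the fixed-width $\lceil\log\sigma\rceil$-bit encoding you describe (order-preserving, prefix-preserving because queries are character-aligned, with the stated space and time bounds falling out by substituting $\ell\log\sigma$ and $m\log\sigma$ for the binary lengths) is precisely that obvious generalization spelled out. No gap.
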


We remark that the space bound in \cite{WeakPrefix} is better than the one
reported above as it is stated in terms of the hollow trie size of the indexed
dictionary. This measure is always within $\Oh{n \log \ell}$ bits but it may be much better
depending on the dictionary. However, the weaker space bound suffices for
the aims of this paper.

\section{An Encoding for Scan-Based Search}
\label{sec:scanning}

As an introduction to our techniques, we show an $\Oh{n\log\log n}$ bit
encoding supporting scan-based order-preserving matching. Given a sequence
$S[1..n]$ we define the rank encoding \(E(S) [1..n] \) as
\[
E(S)[i] = \left\{ \begin{array}{l@{\hspace{4ex}}l}
0.5 & \parbox{.5\textwidth}{if \(S [i]\) is lexicographically smaller than any
\newline
character in \(\{S [1], \ldots, S [i - 1]\}\),}\\[5ex]
j & \parbox{.5\textwidth}{if \(S [i]\) is equal to the lexicographically $j$th
\newline
character in \(\{S [1], \ldots, S [i - 1]\}\),}\\[5ex]
j + 0.5 & \parbox{.5\textwidth}{if \(S [i]\) is larger than the lexicographically $j$th
\newline
character in \(\{S [1], \ldots, S [i - 1]\}\) but smaller than the lexicographically \((j + 1)\)st,}\\[5ex]
|\{S [1], \ldots, S [i - 1]\}| + 0.5 & \parbox{.5\textwidth}{if \(S [i]\) is lexicographically larger than any
\newline
character in \(\{S [1], \ldots, S [i - 1]\}\).}
\end{array} \right.
\]
This is similar to the representations used in previous papers on
order-preserving matching.  We can build \(E (S)\) in $\Oh{n \log n}$ time.
However, we would ideally need $E(S[i..n])$ for $i=1,\ldots,n$, since
$P[1..m]$ has an order-preserving match in $S[i..i+m-1]$ if and only if $E(P)
= E(S[i..i+m-1])$. Assuming  $P$ has polylogarithmic size, we can devise a
more space efficient encoding.

\begin{lemma}
\label{lem:extracting} Given $S[1..n]$ and a constant $c \geq 1$ let $\ell
=\lg^c n$. We can store $\Oh{n \log \log n}$ bits such that later, given $i$
and \(m \leq \ell\), we can compute \(E (S [i..i + m - 1])\) in $\Oh{m}$
time.
\end{lemma}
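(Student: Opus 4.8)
The key observation is that $E(S[i..i+m-1])[k]$ — the relative rank of $S[i+k-1]$ among $\{S[i],\dots,S[i+k-2]\}$ — depends only on how $S[i+k-1]$ compares to the at most $\ell-1$ characters preceding it inside the window. So I would precompute, for every position $j$ of $S$, a compact summary of how $S[j]$ fits among its $\ell$ left-neighbours, and then assemble $E(S[i..i+m-1])$ on the fly from these summaries. Concretely, for each $j$ let me define $\pi(j)$ to be the rank of $S[j]$ within the multiset $\{S[\max(1,j-\ell+1)],\dots,S[j]\}$ — i.e. how many of those characters are strictly smaller than $S[j]$, plus a half-integer tweak exactly as in the definition of $E$ to record ties and strict inequalities. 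Each $\pi(j)$ is a value in a set of size $\Oh{\ell}$, so it costs $\Oh{\log\ell}=\Oh{\log\log n}$ bits, and storing all $n$ of them costs $\Oh{n\log\log n}$ bits, matching the claimed bound.

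**From the summaries to $E$.** Given a query $(i,m)$, I want $E(S[i..i+m-1])[k]$ for $k=1,\dots,m$. The subtlety is that $\pi(i+k-1)$ records the rank of $S[i+k-1]$ among the $\ell$ characters ending at position $i+k-1$, whereas $E(S[i..i+m-1])[k]$ needs the rank only among $S[i],\dots,S[i+k-2]$ — a prefix of that window that may be shorter. For the first query this is not immediately the same set. I would handle this by storing a second, finer piece of information: for each $j$, instead of a single rank, store the sorted-order position of $S[j]$ inside its length-$\ell$ window together with enough structure to "peel off" the oldest characters. A clean way to do this is to store, for each window of length $\ell$ (say the windows ending at multiples of $\ell$, overlapped appropriately so every length-$\ell$ substring is covered), the rank encoding $E$ of that window itself; since a length-$\ell$ window has a rank encoding over an alphabet of size $\Oh{\ell}$, it takes $\Oh{\ell\log\ell}$ bits, and there are $\Oh{n/\ell}$ such windows, for $\Oh{n\log\ell}=\Oh{n\log\log n}$ bits total. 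Then, given $(i,m)$, the substring $S[i..i+m-1]$ lies inside $\Oh{1}$ consecutive stored windows; within a stored window, computing $E$ of an arbitrary contiguous sub-block from the stored $E$ of the whole block is a purely local operation — it amounts to re-ranking $m\le\ell$ values against each other, which, using the stored relative-order information and a precomputed table of size $o(n)$ (a standard four-Russians trick on blocks of $\Oh{\log n/\log\log n}$ symbols each of $\Oh{\log\log n}$ bits), can be done in $\Oh{m}$ time.

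**Where the work is.** The routine bookkeeping — choosing the window length, overlapping windows so every queried substring is covered by boundary-aligned blocks, and bounding the total space — is straightforward. The one step that needs care, and which I expect to be the main obstacle, is showing that from the stored rank encoding $E$ of a full length-$\ell$ window one can extract $E$ of an arbitrary contiguous sub-block in $\Oh{m}$ rather than $\Oh{m\log m}$ or $\Oh{m^2}$ time. The issue is that deleting a prefix of characters from a window can change the relative ranks of the surviving characters in a way that is not simply "subtract a constant": a character that was, say, $5$th smallest may become $3$rd smallest once two smaller characters to its left are dropped. I would resolve this by precomputing a universal lookup table indexed by (the $E$-encoding of a block of $b=\Theta(\log n/\log\log n)$ symbols, an offset, a length) that returns the $E$-encoding of the requested sub-block, so that each block contributes $\Oh{1}$ table lookups and $m$ symbols are processed in $\Oh{m/b + \text{stitching}} = \Oh{m}$ time after an $o(n)$-bit, $o(n)$-time preprocessing of the table. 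Assembling partial answers across the $\Oh{m/b}$ blocks requires merging rank information at block boundaries, but since there are only $\Oh{m/b}$ boundaries and each merge touches $\Oh{b}$ symbols via another table lookup, the total stays $\Oh{m}$.
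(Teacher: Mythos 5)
Your first half is essentially the paper's: store, for overlapping windows anchored at multiples of $\ell$, the rank of each character within its window, so that each position costs $\Oh{\log\ell}=\Oh{\log\log n}$ bits and a query reduces to computing $E$ of a sequence of $m$ integers bounded by $\Oh{\ell}$. One correction there: the windows must have length about $2\ell$ and start at multiples of $\ell$ (as the paper does with $S[i..i+2\ell]$), so that every substring of length at most $\ell$ lies \emph{entirely} inside a single window; length-$\ell$ windows ending at multiples of $\ell$ tile $S$ without overlap, and a boundary-straddling query would then see rank information computed relative to two different reference sets, which reintroduces exactly the merging problem you are trying to avoid.

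The genuine gap is in the second half, where you claim the re-ranking of those $m$ small integers can be done in $\Oh{m}$ time by four-Russians tables plus ``merging at block boundaries.'' Entry $k$ of $E(S[i..i+m-1])$ is the rank of the $k$th character among \emph{all} $k-1$ preceding characters of the query range, not just those in its own $b$-symbol block, so the work at the $t$th boundary is not a merge touching $\Oh{b}$ symbols but the incorporation of rank information about all $tb$ characters seen so far. That accumulated multiset (up to $\lg^c n$ values over a universe of size $\Oh{\ell}$, with ties distinguished from strict inequalities) occupies $\omega(w)$ bits, so it cannot serve as a table index, and the natural implementations of the cross-block step (a Fenwick tree over the universe, or repeated packed merges) cost $\Omega(m\log\log n)$ or worse; your accounting $\Oh{(m/b)\cdot b}=\Oh{m}$ therefore does not follow. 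The paper closes exactly this hole by invoking the dynamic integer-set structure of \cite{PatrascuT14}, which for sets of $\Oh{w^c}$ word-sized integers supports insertion and rank in $\Oh{1}$ worst-case time: one inserts the $m$ window-ranks one after another and reads off each prefix-rank in constant time, for $\Oh{m}$ total. Without that ingredient (or a reproof of something equally strong), your argument only establishes an $\Oh{m\log m}$ or $\Oh{m\log\log n}$ bound for the step you yourself identify as the main obstacle.
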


\begin{proof}
For every position $i$ in $S$ which is multiple of $\ell = \lg^c n$, we store the ranks
of the characters in the window \(S [i..i+2\ell]\). The ranks are values at most
\(2 \ell\), thus they are stored in $\Oh{\log \ell}$ bits each.
We concatenate the ranks of each window in a vector $V$, which has length
$\Oh{n}$ and takes $\Oh{n \log \ell}$ bits.  Every range \(S [i..i + m - 1]\)
of length \(m \leq \ell\) is fully contained in at least one window and in
constant time we can convert $i$ into $i'$ such that \(V [i'..i' + m - 1]\)
contains the ranks of \(S [i], \ldots, S [i + m - 1]\) in that window.

Computing \(E (S [i..i + m - 1])\) na\"ively from these ranks would take
$\Oh{m \log m}$ time. We can speed up this computation by exploiting the fact
that \(S [i..i + m - 1]\) has polylogaritmic length. Indeed, a recent result
\cite{PatrascuT14} introduces a data structure to represent a small dynamic
set $\cal S$ of $\Oh{w^c}$ integers of $w$ bits each supporting, among the
others, insertions and rank queries in $\Oh{1}$ time. Given an integer $x$,
the rank of $x$ is the number of integers in $\cal S$ that are smaller than
or equal to $x$. All operations are supported in constant time for sets of
size $\Oh{w^c}$. This result allows us to compute $E(S [i..i + m - 1])$ in
$\Oh{m}$ time. Indeed, we can use the above data structure to insert $S [i..i
+ m - 1]$'s characters one after the other and compute their ranks in
constant time.
\end{proof}

It follows from Lemma~\ref{lem:extracting} that given $S$ and $c$, we can
store an $\Oh{n \log \log n}$-bit encoding of $S$ such that later, given a
pattern \(P [1..m]\) with \(m \leq \lg^c n\), we can compute \(E (S [i..i  +
m - 1])\) for each position $i$ in turn and compare it to \(E (P)\), and thus
find all the order-preserving matches of $P$ in $\Oh{n m}$ time.  (It is
possible to speed this scan-based algorithm up by avoiding computing each \(E
(S [i..i + m - 1])\) from scratch but, since this is only an intermediate
result, we do not pursue it further here.)  We note that we can construct the
encoding in Lemma~\ref{lem:extracting} knowing only how each character of $S$
compares to $\Oh{\lg^c n}$ neighbouring characters.

\begin{corollary}
\label{cor:scanning} Given \(S [1..n]\) and a constant \(c \geq 1\), we can
store an encoding of $S$ in $\Oh{n \lg \lg n}$ bits such that later, given a
pattern \(P [1..m]\) with \(m \leq \lg^c n\), we can find all the
order-preserving matches of $P$ in $S$ in $\Oh{n m}$ time.
\end{corollary}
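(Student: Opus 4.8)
The plan is to apply Lemma~\ref{lem:extracting} at every text position. First I would store, as the entire encoding, the $\Oh{n \log\log n}$-bit structure of Lemma~\ref{lem:extracting} instantiated with the given constant $c$; no other information about $S$ is kept.

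The query algorithm rests on the observation already noted above: $P$ has an order-preserving match in $S[i..i+m-1]$ if and only if $E(P) = E(S[i..i+m-1])$, because the rank encoding $E(\cdot)$ records exactly the relative order of the characters and is therefore invariant under order-preserving maps. Hence, given $P[1..m]$ with $m \leq \lg^c n$, I would first compute $E(P)$ and then, for each $i = 1, \dots, n-m+1$, use Lemma~\ref{lem:extracting} to extract $E(S[i..i+m-1])$ in $\Oh{m}$ time and compare it entrywise with $E(P)$, reporting $i$ precisely when the two coincide.

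For the running time: $E(P)$ can be computed in $\Oh{m}$ time using the dynamic rank structure of~\cite{PatrascuT14} exactly as in the proof of Lemma~\ref{lem:extracting} (or in $\Oh{m\log m}$ time by plain sorting, which is within the claimed bound since $m \leq \lg^c n$). Each entry of a length-$m$ rank encoding is one of the $\Oh{m}$ values $0.5, 1, 1.5, \dots, m+0.5$, so it fits in $\Oh{\log m}$ bits and two such encodings are compared in $\Oh{m}$ time. Summing $\Oh{m}$ per position over the $\Oh{n}$ positions gives $\Oh{nm}$. I do not expect any real obstacle: the corollary is essentially Lemma~\ref{lem:extracting} combined with the characterization of order-preserving matches via $E(\cdot)$, and the only points that need to be spelled out are that characterization and the fact that a single window encoding can be produced and tested in linear time.
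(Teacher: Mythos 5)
Your proposal is correct and follows exactly the paper's own argument: store the Lemma~\ref{lem:extracting} encoding, compute $E(P)$, and for each of the $\Oh{n}$ positions extract $E(S[i..i+m-1])$ in $\Oh{m}$ time and compare it entrywise to $E(P)$, for $\Oh{nm}$ total time. The extra details you spell out (the invariance of $E(\cdot)$ under order-preserving maps and the per-window comparison cost) are exactly the right justifications, which the paper leaves implicit.
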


We will not use Corollary~\ref{cor:scanning} in the rest of this paper, but we state it as a baseline easily proven from Lemma~\ref{lem:extracting}.

\section{Adding an Index to the Encoding}
\label{sec:indexing}

Suppose we are given \(S [1..n]\) and a constant \(c \geq 1\). We build the
$\Oh{n \log \log n}$-bit encoding of Lemma~\ref{lem:extracting} for \(\ell =
\lg^c n + \lg n\) and call it $\Sell$.  Using $\Sell$ we can compute $E(S')$
for any substring $S'$ of $S$ of length \(|S'| \leq \ell\) in $\Oh{|S'|}$
time. We now show how to complement $\Sell$ with a kind of ``sampled suffix
array'' using $\Oh{n\log\log n}$ more bits, such that we can search for a
pattern \(P [1..m]\) with \(m \leq \lg^c n\) and return the starting position
of an order-preserving match for $P$ in $S$, if there is one. Out first
solution has $\Oh{m \log^3 n}$ query time; we will improve the query time to
$\Oh{m}$ in the next section.


We define the rank-encoded suffix array \(R [1..n]\) of $S$ such that \(R [i]
= j\) if \(E (S [i..n]) \) is the lexicographically $j$th string in \(\{E
(S[1..n]), E (S [2..n]), \ldots, E (S [n])\}\).  Note that $E(S[i..n])$ has
length~$n-i+1$. Figure~\ref{fig:example} shows an example.

Our algorithm consists of a {\em searching phase}  followed by a {\em
verification phase}.
The goal of the searching phase is to identify a range $[l,r]$ in $R$
which contains all the encodings prefixed by \(E (P)\), if any, or
an arbitrary interval if $P$ does not occur.
The verification phase has to check if there is at least an occurrence of $P$
in this interval, and return one position at which $P$ occurs.

\myparagraph{Searching phase}
Similarly to how we can use a normal suffix array and
$S$ to support normal pattern matching, we could use $R$ and $S$ to find all
order-preserving matches for a pattern \(P [1..m]\) in $\Oh{m \log n}$ time
via binary search, i.e., at each step we choose an index $i$, extract \(S [R
[i]..R [i] + m - 1]\), compute its rank encoding and compare it to \(E (P)\),
all in $\Oh{m}$ time. If $m \leq \ell$ we can compute $E(S[R[i]..R [i] + m -
1])$ using $\Sell$ instead of $S$, still in $\Oh{m}$ time, but storing $R$ still takes \(\Omega (n
\log n)\) bits.

Therefore, for our searching phase we sample and store only every
$\sample$-th element of $R$, by position, and every element of $R$ equal $1$ or $n$ or a
multiple of $\sample$, where $\sample= \lfloor \lg n / \log \log n \rfloor$.
This takes $\Oh{n\log\log n}$ bits. Notice we can still find in $\Oh{m \log
n}$ time via binary search in the sampled $R$ an order-preserving match for
any pattern \(P [1..m]\) that has at least $\sample$ order-preserving matches
in $S$.  If $P$ has fewer than $\sample$ order-preserving matches in $S$ but
we happen to have sampled a cell of $R$ pointing to the starting position of
one of those matches, then our binary search still finds it. Otherwise, we find an interval of
length at most \(\sample - 1\) which contains pointers at least to all the
order-preserving matches for $P$ in $S$; on this interval we perform the
verification phase.

\begin{figure}[t]
\centering
\resizebox{\textwidth}{!}
{$\begin{array}{rrclr@{\hspace{4ex}}llllllllllllllllllllllllllllll}
i & R [i] & L [i] & B [i] & D [i] & E (S [R[i]..n])\\
\hline\\[-.5ex]
1 & {\bf 30} & & & & 0.5 \\
2 & 29 & {\bf 2} & {\bf 1.5} & {\bf 4} & 0.5\ 0.5 \\
3 & 22 & {\bf 2} & {\bf 0.5} & {\bf 2} & 0.5\ 0.5\ 0.5\ 0.5\ 1.5\ 5\ 5.5\ 6.5\ 1 \\
4 & {\bf 13} & & & & 0.5\ 0.5\ 0.5\ 1\ 0.5\ 1.5\ 4\ 4.5\ 1\ 4\ 3.5\ 3.5\ 2\ 3\ 6\ 7\ 7.5\ 2 \\
5 & 2 & {\bf 2} & {\bf 0.5} & {\bf 1} & 0.5\ 0.5\ 0.5\ 1.5\ 2.5\ 3.5\ 5.5\ 2.5\ 2\ 5\ 4\ 8\ 4\ 1\ 1\ 0.5\ 1.5\ 6\ 7\ 1\ 6\ 5\ 4\ 2\ 3\ 6\ 7\ 8\ 2 \\
6 & 23 & {\bf 3} & {\bf 3.5} & {\bf 3} & 0.5\ 0.5\ 0.5\ 1.5\ 4.5\ 5.5\ 6.5\ 1 \\
7 & {\bf 8} & & & & 0.5\ 0.5\ 0.5\ 2.5\ 2.5\ 5.5\ 3\ 0.5\ 1\ 0.5\ 1.5\ 6\ 7\ 1\ 6\ 5\ 4\ 2\ 3\ 6\ 7\ 7.5\ 2 \\
8 & {\bf 14} & & & & 0.5\ 0.5\ 1\ 0.5\ 1.5\ 4\ 4.5\ 1\ 4\ 3.5\ 3.5\ 2\ 3\ 6\ 7\ 7.5\ 2 \\
9 & {\bf 20} & & & & 0.5\ 0.5\ 1.5\ 1.5\ 1.5\ 1.5\ 2.5\ 6\ 7\ 7.5\ 2 \\
10 & 3 & {\bf 3} & {\bf 3.5} & {\bf 1} & 0.5\ 0.5\ 1.5\ 2.5\ 3.5\ 5.5\ 2.5\ 2\ 5\ 4\ 7.5\ 4\ 1\ 1\ 0.5\ 1.5\ 6\ 7\ 1\ 6\ 5\ 4\ 2\ 3\ 6\ 7\ 8\ 2 \\
11 & {\bf 16} & & & & 0.5\ 0.5\ 1.5\ 3.5\ 4.5\ 1\ 4\ 3.5\ 3.5\ 2\ 3\ 6\ 7\ 7.5\ 2 \\
12 & {\bf 24} & & & & 0.5\ 0.5\ 1.5\ 3.5\ 4.5\ 5.5\ 1 \\
13 & 11 & {\bf 2} & {\bf 0.5} & {\bf 3} & 0.5\ 0.5\ 2.5\ 1\ 0.5\ 1\ 0.5\ 1.5\ 4\ 5\ 1\ 4\ 3.5\ 3.5\ 2\ 3\ 6\ 7\ 7.5\ 2 \\
14 & 9 & {\bf 3} & {\bf 3.5} & {\bf 3} & 0.5\ 0.5\ 2.5\ 2.5\ 4.5\ 3\ 0.5\ 1\ 0.5\ 1.5\ 6\ 7\ 1\ 6\ 5\ 4\ 2\ 3\ 6\ 7\ 7.5\ 2 \\
15 & 15 & {\bf 2} & {\bf 1.5} & {\bf 1} & 0.5\ 1\ 0.5\ 1.5\ 3.5\ 4.5\ 1\ 4\ 3.5\ 3.5\ 2\ 3\ 6\ 7\ 7.5\ 2 \\
16 & {\bf 28} & & & & 0.5\ 1.5\ 0.5 \\
17 & 7 & {\bf 3} & {\bf 1.5} & {\bf 4} & 0.5\ 1.5\ 0.5\ 0.5\ 3\ 2.5\ 5.5\ 3\ 0.5\ 1\ 0.5\ 1.5\ 6\ 7\ 1\ 6\ 5\ 4\ 2\ 3\ 6\ 7\ 7.5\ 2 \\
18 & 19 & {\bf 3} & {\bf 1.5} & {\bf 5} & 0.5\ 1.5\ 0.5\ 2\ 1.5\ 1.5\ 1.5\ 2.5\ 6\ 7\ 7.5\ 2 \\
19 & {\bf 12} & & & & 0.5\ 1.5\ 1\ 0.5\ 1\ 0.5\ 1.5\ 4\ 4.5\ 1\ 4\ 3.5\ 3.5\ 2\ 3\ 6\ 7\ 7.5\ 2 \\
20 & {\bf 1} & & & & 0.5\ 1.5\ 1.5\ 0.5\ 2\ 2.5\ 3.5\ 5.5\ 2.5\ 2\ 5\ 4\ 8\ 4\ 1\ 1\ 0.5\ 1.5\ 6\ 7\ 1\ 6\ 5\ 4\ 2\ 3\ 6\ 7\ 8\ 2 \\
21 & 21 & {\bf 2} & {\bf 2.5} & {\bf 1} & 0.5\ 1.5\ 1.5\ 1.5\ 1.5\ 2.5\ 6\ 6.5\ 7.5\ 2 \\
22 & 10 & {\bf 2} & {\bf 1.5} & {\bf 2} & 0.5\ 1.5\ 1.5\ 3.5\ 2\ 0.5\ 1\ 0.5\ 1.5\ 5\ 6\ 1\ 5\ 4.5\ 4\ 2\ 3\ 6\ 7\ 7.5\ 2 \\
23 & 27 & {\bf 4} & {\bf 1.5} & {\bf 2} & 0.5\ 1.5\ 2.5\ 0.5 \\
24 & {\bf 6} & & & & 0.5\ 1.5\ 2.5\ 0.5\ 0.5\ 4\ 3\ 5.5\ 3\ 0.5\ 1\ 0.5\ 1.5\ 6\ 7\ 1\ 6\ 5\ 4\ 2\ 3\ 6\ 7\ 7.5\ 2 \\
25 & 18 & {\bf 4} & {\bf 1} & {\bf 3} & 0.5\ 1.5\ 2.5\ 0.5\ 3\ 2.5\ 2.5\ 2\ 2.5\ 6\ 7\ 7.5\ 2 \\
26 & 26 & {\bf 4} & {\bf 0.5} & {\bf 1} & 0.5\ 1.5\ 2.5\ 3.5\ 0.5 \\
27 & 17 & {\bf 2} & {\bf 2.5} & {\bf 3} & 0.5\ 1.5\ 2.5\ 3.5\ 1\ 3\ 2.5\ 2.5\ 2\ 2.5\ 6\ 7\ 7.5\ 2 \\
28 & {\bf 5} & & & & 0.5\ 1.5\ 2.5\ 3.5\ 1.5\ 1\ 4\ 3\ 5.5\ 3\ 0.5\ 1\ 0.5\ 1.5\ 6\ 7\ 1\ 6\ 5\ 4\ 2\ 3\ 6\ 7\ 7.5\ 2 \\
29 & 25 & {\bf 2} & {\bf 2.5} & {\bf 4} & 0.5\ 1.5\ 2.5\ 3.5\ 4.5\ 1 \\
30 & {\bf 4} & & & & 0.5\ 1.5\ 2.5\ 3.5\ 4.5\ 2.5\ 2\ 5\ 4\ 6.5\ 4\ 1\ 1\ 0.5\ 1.5\ 6\ 7\ 1\ 6\ 5\ 4\ 2\ 3\ 6\ 7\ 7.5\ 2
\end{array}$}
\caption{The rank-encoded suffix array \(R [1..30]\) for \(S [1..30] =
3\,9\,7\,2\,3\,5\,6\,8\,4\,3\,6\,5\,9\,5\,2\,2\,0\,1\,5\,6\,0\,5\,4\,3\,1\,2\,5\,6\,7\,1\), with \(L [i]\), \(B [i]\) and \(D [i]\) computed for \(\sample = 4\).  Stored values are shown in boldface.}
\label{fig:example}
\end{figure}

\myparagraph{Verification phase}
The verification phase receives a range $R [l,r]$ (although $R$ is not stored completely) and has to check if that range contains the starting position of an order preserving match for $P$ and, if so, return its position.
This is done by adding auxiliary data structures to the sampled entries of $R$.

Suppose that for each unsampled element \(R [i] = j\) we store the following
data.
\begin{itemize}
\item the smallest number \(L [i]\) (if one exists) such that \(S [j - 1..j + L [i] - 1]\) has at most \(\lg^c n\) order-preserving matches in $S$;
\item the rank \(B [i] = E(S [j - 1..j + L [i] - 1]^{\mathrm{rev}})[L[i]
    + 1] \leq L [i] + 1 / 2\) of \(S [j - 1]\) in \(S [j..j + L [i] -
    1]\), where the superscript rev indicates that the string is
    reversed;
\item the distance \(D [i]\) to the cell of $R$ containing \(j - 1\) from
    the last sampled element $x$ such that \(E (S [x..x + L [i]])\) is
    lexicographically smaller than \(E (S [j - 1..j + L [i] - 1])\).
\end{itemize}
Figure~\ref{fig:example} shows the values in $L$, $B$ and $D$ for our example.

Assume we are given \(P [1..m]\) and $i$ and told that \(S [R [i]..R[i] + m - 1]\)
is an order-preserving match for $P$, but we are not told the value \(R [i] = j\).
If \(R [i]\) is sampled, of course, then we can return $j$ immediately.  If \(L [i]\)
does not exist or is greater than $m$ then $P$ has at least \(\lg^c n \geq \sample\)\
order-preserving matches in $S$, so we can find one in $\Oh{m}$ time: we consider the sampled values from $R$ that precede and follow \(R [i]\) and check with Lemma~\ref{lem:extracting} whether there are order-preserving matches starting at those sampled values.
Otherwise, from \(L [i]\), \(B [i]\) and $P$, we can compute \(E  (S [j - 1..j + L [i] - 1])\)
in $\Oh{m \log m}$ time: we take the length-\(L [i]\) prefix of $P$; if \(B [i]\) is
an integer, we prepend to \(P [1..L [i]]\) a character equal to the lexicographically
\(B [i]\)th character in that prefix; if \(B [i]\) is \(r + 0.5\) for some integer
$r$ with \(1 \leq r < L [i]\), we prepend a character lexicographically between the
lexicographically $r$th and \((r +1)\)st characters in the prefix;
if \(B [i] = 0.5\) or \(B [i] = L [i] + 0.5\), we prepend a character
lexicographically smaller or larger than any in the prefix, respectively.
We can then find in $\Oh{m \log n}$ time the position in $R$ of $x$, the last
sampled element such that \(E (S [x..x + L [i]])\) is lexicographically smaller
than \(E (S [j - 1..j + L [i] - 1])\).  Adding \(D [i]\) to this position gives
us the position $i'$ of \(j - 1\) in $R$.
Repeating this procedure until we reach a sampled cell of $R$ takes $\Oh{m \log^2 n/\log\log n}=\Oh{m \log^2 n}$
time, and we can then compute and return $j$.  As the reader may have noticed,
the procedure is very similar to how we use backward stepping to locate occurrences of
a pattern with an FM-index~\cite{isci/FerraginaM01}, so we refer to it as a backward step at position $i$.

Even if we do not really know whether \(S [R [i]..R[i] + m - 1]\) is an
order-preserving match for $P$, we can still start at the cell \(R [i]\) and
repeatedly apply this procedure: if we do not find a sampled cell after
\(\sample - 1\) repetitions, then \(S [R [i]..R[i] + m - 1]\) is not an
order-preserving match for $P$; if we do, then we add the number of times we
have repeated the procedure to the contents of the sampled cell to obtain the
contents of \(R [i] = j\). Then, using $\Sell$ we compute \(E(S [j..k + m -
1])\) in $\Oh{m}$ time, compare it to \(E (P)\) and, if they are the same,
return $j$. This still takes $\Oh{m \log^2 n}$ time. Therefore, after our searching phase,
if we find an interval $[l,r]$ of length  at most
\(\sample - 1\) which contains pointers to all the
order-preserving matches for $P$ in $S$ (instead of an order-preserving match
directly), then we can check each cell in that interval with this procedure,
in a total of $\Oh{m \log^3 n}$ time.

If \(R [i] = j\) is the starting position of an order-preserving match for a
pattern \(P [1..m]\) with \(m \leq \lg^c n\) that has at most \(\sample\)
order-preserving matches in $S$, then \(L [i] \leq \lg^c n\). Moreover, if
\(R [i'] = j - 1\) then \(L [i'] \leq \lg^c n + 1\) and, more generally, if
\(R [i''] = j - t\) then \(L [i''] \leq \lg^c n + t\). Therefore, we can
repeat the stepping procedure described above and find $j$ without ever
reading a value in $L$ larger than \(\lg^c n + \lg n\) and, since each value
in $B$ is bounded in terms of the corresponding value in $L$, without ever
reading a value in $B$ larger than \(\lg^c n + \lg n + 1 / 2\). It follows
that we can replace any values in $L$ and $B$ greater than \(\lg^c n + \lg n
+ 1 / 2\) by the flag $-1$, indicating that we can stop the procedure when we
read it.  With this modification, each value in $L$ and $B$ takes $\Oh{\log
\log n}$ bits so, since each value in $D$ is less than \(\lg^c n + \lg n\)
and also takes $\Oh{\log \log n}$ bits, $L$, $B$ and $D$ take a total of
$\Oh{n \log \log n}$ bits. Since also the encoding $\Sell$ from
Lemma~\ref{lem:extracting} with \(\ell = \lg^c n + \lg n\) takes $\Oh{n \log
\log n}$ bits, the following intermediate theorem summarizes our results so far.

\begin{theorem}
\label{thm:slow} Given \(S [1..n]\) and a constant \(c \geq 1\), we can store
an encoding of $S$ in $\Oh{n \lg \lg n}$ bits such that later, given a
pattern \(P [1..m]\) with \(m \leq \lg^c n\), in $\Oh{m \log^3 n}$ time we
can return the position of an order-preserving match of $P$ in $S$ (if one
exists).
\end{theorem}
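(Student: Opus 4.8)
The plan is to assemble the pieces developed above into one structure. First I would instantiate Lemma~\ref{lem:extracting} with $\ell=\lg^c n+\lg n$ to obtain the encoding $\Sell$; it uses $\Oh{n\lg\lg n}$ bits and, for any $i$ and any $m\le\lg^c n$, lets us compute $E(S[R[i]..R[i]+m-1])$ (indeed $E(S')$ for any substring $S'$ with $|S'|\le\ell$) in $\Oh{m}$ time. Next I would store the sampled rank-encoded suffix array: the cells of $R$ whose index is a multiple of $\sample=\lfloor\lg n/\lg\lg n\rfloor$, together with the cells whose value is $1$ or $n$; this is $\Oh{n/\sample}$ values of $\Oh{\log n}$ bits each, i.e. $\Oh{n\lg\lg n}$ bits. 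Finally, for every unsampled cell $i$ I would store $L[i]$, $B[i]$, $D[i]$ as defined above, replacing any value of $L$ or $B$ exceeding $\lg^c n+\lg n+1/2$ by the flag $-1$; since $D[i]<\lg^c n+\lg n$ as well, each of $L$, $B$, $D$ fits in $\Oh{n\lg\lg n}$ bits, so the whole structure is $\Oh{n\lg\lg n}$ bits.

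For a query $P[1..m]$ with $m\le\lg^c n$ I would first run the searching phase: a binary search over the sampled cells of $R$ in which each comparison extracts $S[R[i]..R[i]+m-1]$ via $\Sell$, rank-encodes it in $\Oh{m}$ time, and compares it with $E(P)$. In $\Oh{m\log n}$ time this either returns the start of an order-preserving match directly (when a sampled cell points to one, in particular whenever $P$ has at least $\sample$ matches) or yields an interval of at most $\sample-1$ cells of $R$ guaranteed to contain pointers to all order-preserving matches of $P$ in $S$. In the latter case I would run the verification phase on each of those $\le\sample-1$ cells $R[i]$: if $L[i]$ is absent or larger than $m$ then $P$ has at least $\lg^c n\ge\sample$ matches and a neighbouring sampled cell exhibits one in $\Oh{m}$ time; otherwise I apply a backward step --- reconstruct $E(S[R[i]-1..R[i]+L[i]-1])$ from the length-$L[i]$ prefix of $P$ and the rank $B[i]$, then locate the cell holding $R[i]-1$ by binary-searching the sampled $R$ for the last sampled $x$ with $E(S[x..x+L[i]])$ lexicographically below this string and adding $D[i]$ --- iterating until a sampled cell is reached, or a $-1$ flag is read, or $\sample-1$ steps elapse without success (in which case $R[i]$ is not the start of a match). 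A single step costs $\Oh{m\log n}$, a chain costs $\Oh{\sample\cdot m\log n}$, and over all $\le\sample-1$ cells of the interval the verification phase costs $\Oh{\sample^2\cdot m\log n}=\Oh{m\log^3 n}$. When a sampled cell is reached, adding the number of steps taken recovers $R[i]=j$; using $\Sell$ we compute $E(S[j..j+m-1])$, compare it with $E(P)$, and return $j$ if they agree.

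Correctness of the searching phase is exactly as for an ordinary suffix array: $P$ order-preservingly matches $S[k..k+m-1]$ iff $E(P)=E(S[k..k+m-1])$, which happens iff $E(S[k..n])$ lies in the lexicographic range of the entries of $R$ prefixed by $E(P)$. The point I expect to be the main obstacle is arguing that a backward-stepping chain started at a genuine match never forces a flagged entry to be read, so that the $\Oh{\lg\lg n}$-bit encoding of $L$ and $B$ is safe: if $R[i]=j$ starts a match of a pattern with at most $\sample$ matches then $L[i]\le\lg^c n$, and more generally the cell holding $j-t$ has $L$-value at most $\lg^c n+t$; since among $j,j-1,\ldots,j-\sample+1$ one is a multiple of $\sample$ (or the walk reaches $1$ first), every step of the chain reads $L$ and $B$ at a cell holding some $j-t$ with $0\le t\le\sample-1\le\lg n$, hence all visited $L$-values are $\le\lg^c n+\lg n$ and all visited $B$-values are $\le\lg^c n+\lg n+1/2$. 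The residual checks are routine: that reconstructing $E(S[j-1..j+L[i]-1])$ from $P$'s prefix and $B[i]$ is well defined in each of the cases where $B[i]$ is an integer, a half-integer strictly inside the range, or one of the extremes $0.5$ and $L[i]+0.5$; and that counting the sampled suffixes lexicographically below $E(S[j-1..j+L[i]-1])$ and then adding $D[i]$ lands on the cell of $R$ holding $j-1$. Both follow from the definitions of $B$ and $D$. Combining the $\Oh{n\lg\lg n}$ space bound with the $\Oh{m\log^3 n}$ query time gives the claim.
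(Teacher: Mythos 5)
Your proposal is correct and follows essentially the same route as the paper: Lemma~\ref{lem:extracting} with $\ell=\lg^c n+\lg n$, a positionally sampled $R$, the $L,B,D$ arrays truncated by a flag, binary-search searching phase, and backward-stepping verification over an interval of at most $\sample-1$ cells, with the same $\Oh{\sample^2\cdot m\log n}=\Oh{m\log^3 n}$ accounting. One small slip: when describing the sampling you list only cells of $R$ whose index is a multiple of $\sample$ and whose value is $1$ or $n$, yet your own termination argument (``among $j,j-1,\ldots,j-\sample+1$ one is a multiple of $\sample$'') requires, as in the paper, that cells whose \emph{value} is a multiple of $\sample$ also be sampled.
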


\myparagraph{A complete search example} Suppose we are searching for
order-preserving matches for \(P = 2\,3\,1\,2\) in the string \(S [1..30]\)
shown in Figure~\ref{fig:example}. Binary search on $R$ tells us that
pointers to all the matches are located in $R$ strictly between \(R [16] =
28\) and \(R [19] = 12\), because
\begin{eqnarray*}
\lefteqn{E (S [28..30] = E (6\,7\,1) = 0.5\;1.5\;0.5}\\
& \prec & E (P) = E (2\,3\,1\,2) = 0.5\;1.5\;0.5\;2\\
& \prec & E (S [12..14]) = E (5\,9\,5) = 0.5\;1.5\;1\,;
\end{eqnarray*}
notice \(R [16] = 28\) and \(R [19] = 12\) are stored because 16, 28 and 12
are multiples of \(\sample = 4\).

We first check whether \(R [17]\) points to an order-preserving match for $P$.  That is, we assume (incorrectly) that it does; we take the first \(L [17] = 3\) characters of $P$; and, because \(B [17] = 1.5\), we prepend a character between the lexicographically first and second, say \(1.5\).  This gives us \(1.5\,2\,3\,1\), whose encoding is \(0.5\,1.5\,2.5\,0.5\).  Another binary search on $R$ shows that \(R [20] = 1\) is the last sampled element $x$ such that \(E (S [x..x + 3])\), in this case \(0.5\,1.5\,1.5\,0.5\), is lexicographically smaller than \(0.5\,1.5\,2.5\,0.5\).  Adding \(D [17] = 4\) to 20, we would conclude that \(R [24] = R [17] - 1\) (which happens to be true in this case) and that \(0.5\,1.5\,2.5\,0.5\) is a prefix of \(E (S [R [24]..n])\) (which also happens to be true).  Since \(R [24] = 6\) is sampled, however, we compute \(E (S [7..10]) = 0.5\,1.5\,0.5\,0.5\) and, since it is not the same as $P$'s encoding, we reject our initial assumption that \(R [17]\) points to an order-preserving match for $P$.

We now check whether \(R [18]\) points to an order preserving match for $P$.  That is, we assume (correctly this time) that it does; we take the first \(L [18] = 3\) characters of $P$; and, because \(B [18] = 1.5\), we prepend a character between the lexicographically first and second, say \(1.5\).  This again gives us \(1.5\,3\,2\,1\), whose encoding is \(0.5\,1.5\,2.5\,0.5\).  As before, a binary search on $R$ shows that \(R [20] = 1\) is the last sampled element $x$ such that \(E (S [x..x + 3])\) is lexicographically smaller than \(0.5\,1.5\,2.5\,0.5\).  Adding \(D [18] = 5\) to 20, we conclude (correctly) that \(R [25] = R [18] - 1\) and that \(0.5\,1.5\,2.5\,0.5\) is a prefix of \(E (S [R [25]..n])\)

Repeating this procedure with \(L [25] = 4\), \(B [25] = 1\) and \(D [25] = 3\), we build a string with encoding \(0.5\,1.5\,2.5\,0.5\), say \(2\,3\,4\,1\), and prepend a character equal to the lexicographically first, 1.  This gives us \(1\,2\,3\,4\,1\), whose encoding is \(0.5\,1.5\,2.5\,3.5\,1\).  Another binary search shows that \(R [24] = 6\) is the last sampled element $x$ such that \(E (S [x..x + 4])\) is lexicographically smaller than \(0.5\,1.5\,2.5\,3.5\,1\).  We conclude (again correctly) that \(R [27] = R [18] - 2\) and that \(0.5\,1.5\,2.5\,3.5\,1\) is a prefix of \(E (S [R [27]..n])\).

Finally, repeating this procedure with \(L [27] = 2\), \(B [27] = 2.5\) and \(D [27] = 3\), we build a string with encoding \(0.5\,1.5\), say \(1\,2\), and prepend a character lexicographically greater than any currently in the string, say 3.  This gives us \(3\,1\,2\), whose encoding is \(0.5\,0.5\,1.5\).  A final binary search show that \(R [8] = 14\) is the last sampled element $x$ such that \(E (S [x..x + 2])\) is lexicographically smaller than \(0.5\,0.5\,1.5\).  We conclude (again correctly) that \(R [11] = R [18] - 3\) and that \(0.5\,0.5\,1.5\) is a prefix of \(E (S [R [11]..n])\).  Since \(R [11] = 16\) is sampled, we compute \(E (S [19..22]) = 0.5\,1.5\,0.5\,2\) and, since it matches $P$'s encoding, we indeed report \(S [19..22]\) as an order-preserving match for $P$.

\section{Achieving $\Oh{m}$ query time}
\label{sec:faster} In this section we prove our main result:

\begin{theorem}
\label{thm:fast} Given \(S [1..n]\) and a constant \(c \geq 1\), we can store
an encoding of $S$ in $\Oh{n \lg \lg n}$ bits such that later, given a
pattern \(P [1..m]\) with \(m \leq \lg^c n\), in $\Oh{m}$ time we can return
the position of an order-preserving match of $P$ in $S$ (if one exists). In
$\Oh{m}$ time we can also report the total number of order-preserving
occurrences of $P$ in $S$.
\end{theorem}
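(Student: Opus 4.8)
The plan is to keep the two-phase architecture of Theorem~\ref{thm:slow} --- a searching phase returning a range of the rank-encoded suffix array $R$, followed by a verification phase --- but to remove every binary search and every redundant candidate probe, so that searching, counting, and reporting one position all run in $\Oh{m}$.

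For the searching phase I would build the weak-prefix-search structure of Theorem~\ref{thm:weak} on the dictionary $\{E(S[i..n])[1..\min(n-i+1,\ell)] : 1\le i\le n\}$ of truncated rank encodings of the suffixes of $S$, with $\ell=\lg^{c}n+\lg n$ as in Section~\ref{sec:indexing}; identifying each half-integer symbol with an integer in $\{1,\dots,2\ell+1\}$, the alphabet has size $\Theta(\lg^{c}n)$, so $\log\sigma=\Theta(\lg\lg n)$, the strings have length $\Oh{\ell}$, and the structure occupies $\Oh{n\log(\ell\log\sigma)}=\Oh{n\lg\lg n}$ bits. Since $E(S[i..n])[1..j]=E(S[i..i+j-1])$, a string of length $k\le\ell$ is a prefix of a dictionary string exactly when it is the rank encoding of some length-$k$ substring of $S$; hence, after computing $E(P)$ in $\Oh{m}$ time with the structure of~\cite{PatrascuT14} already used in Lemma~\ref{lem:extracting}, one weak prefix search for $E(P)$ returns --- whenever $P$ has an order-preserving occurrence --- the exact range $[l,r]$ of $R$ holding the suffixes prefixed by $E(P)$, and $r-l+1$ is precisely the number of order-preserving occurrences of $P$. (If $P$ does not occur the query may return an arbitrary interval, which the verification below rejects.) The query costs $\Oh{m\log\sigma/w+\log(m\log\sigma)}=\Oh{m\,\lg\lg n/\lg n+\lg\lg n}$, which is $\Oh{m}$ once $m\ge\lg\lg n$; for $m<\lg\lg n$ I would also store a precomputed table --- there are only $2^{\Oh{(\lg\lg n)^{2}}}=n^{o(1)}$ distinct rank encodings of length below $\lg\lg n$, because $E(P)[i]$ takes at most $2i-1$ values, so the count and one occurrence for each cost $o(n)$ bits --- consulted in $\Oh{m}$ time by packing $E(P)$ into one word. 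Crucially, because the returned range is exact, checking one occurrence inside it certifies the whole range, so the count is $r-l+1$ when verification succeeds and $0$ otherwise; unlike in Theorem~\ref{thm:slow} we never probe $\Theta(\sample)$ candidate cells.

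The verification phase must turn $[l,r]$ into one occurrence position (which doubles as the certificate for the count). If $r-l+1\ge\sample$ with $\sample=\lfloor\lg n/\lg\lg n\rfloor$ then, by the sampling of Theorem~\ref{thm:slow}, $[l,r]$ contains an $R$-position that is a multiple of $\sample$, whose value we stored, so we read an occurrence position in $\Oh{1}$ and check it with $\Sell$ in $\Oh{m}$ time. If $r-l+1<\sample$ I would recover $R[l]$ by the backward stepping of Theorem~\ref{thm:slow} with each binary search replaced by a weak prefix search: at a cell $p$ with $R[p]=q$ we rebuild $E(S[q-1..q+L[p]-1])$ from $P$, $L[p]$, and $B[p]$ --- here $L[p]\le m$, since $S[q-1..q+m-1]$ already has fewer than $\sample\le\lg^{c}n$ occurrences --- weak-prefix-search it to obtain the (width $<\lg^{c}n$) range of $R$ containing the cell of $q-1$, and use a $D$-value redefined as the offset of that cell inside this range to jump there; after at most $\sample-1$ such steps we reach a stored cell, recover $q=R[l]$, and verify with $\Sell$.

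The step I expect to be the main obstacle is bounding the $r-l+1<\sample$ case by $\Oh{m}$. The window processed at the $t$-th backward step has length $\Theta(m+t)$, so rebuilding its rank encoding from scratch every time would cost $\Theta(\sample m+\sample^{2})$ in total --- which is $\omega(m)$ once $m=o(\sample)$ --- and the $\Oh{\lg\lg n}$ additive term of weak prefix search, paid $\Theta(\sample)$ times, is already $\Theta(\lg n)$. To get around this I would (i) never rebuild the window encoding but maintain it incrementally across the at most $\sample$ steps, keeping the window's values in the dynamic rank structure of~\cite{PatrascuT14} so that the symbols of its current rank encoding can be produced on the fly for the weak prefix search, and (ii) handle patterns that are simultaneously short ($m<\sample$) and rare ($r-l+1<\sample$) by a dedicated device --- for instance a rank-encoded trie truncated to depth $\Oh{\sample}$ in which each node carries one sampled occurrence together with an $\Oh{\lg\lg n}$-bit pointer to a nearby stored node, so that a stored occurrence is reached in $\Oh{1}$ hops once the node of $E(P)$ has been located in $\Oh{m}$ time. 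Proving that (i) and (ii) can be carried out within $\Oh{n\lg\lg n}$ bits and $\Oh{m}$ time is the crux; everything else above is routine given Theorems~\ref{thm:weak} and~\ref{thm:slow}.
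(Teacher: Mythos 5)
Your searching phase matches the paper's: weak prefix search over the length-$\ell$ truncated suffix encodings, count $=r-l+1$, one verification certifying the whole range. You also correctly diagnose the bottleneck --- $\Theta(\sample)$ backward steps each paying the additive $\Oh{\log(m\log\sigma)}=\Oh{\log\log n}$ term of weak prefix search already sums to $\Theta(\log n)$, which is $\omega(m)$ for short patterns. But at exactly that point your proof stops: you propose devices (i) and (ii) and then concede that proving they fit in $\Oh{n\lg\lg n}$ bits and $\Oh{m}$ time ``is the crux.'' That is a genuine gap, and the paper closes it with two ideas that are absent from your proposal. First, the precomputed table is taken over \emph{all} patterns with $m=o(\log n/\log\log n)$, not just $m<\lg\lg n$: such an encoding occupies $o(\log n)$ bits, so there are $o(\sqrt n)$ of them and the table costs $o(n)$ bits; afterwards one may assume $m=\Omega(\sample)$, which makes your ``simultaneously short and rare'' case --- and hence your unproven trie device (ii) --- disappear entirely. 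Second, the backward step does not use weak prefix search at all: since the encoding $E(S[j-1..j+L[i]-1])$ used at cell $i$ depends only on $S$, the at most $n$ such encodings form a fixed set $\mathcal{E}$, each prefixing exactly one sampled entry; a \emph{monotone minimal perfect hash function} on $\mathcal{E}$ (Theorem~\ref{thn:monotone}) composed with a $\rank$-supported bit vector maps the encoding to that sampled entry in $\Oh{1+m\log\log n/w}$ time with no additive logarithmic term, which is what makes $\sample$ steps cost $\Oh{m}$ in total.

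Your device (i) is also unnecessary. Rebuilding the window encoding from scratch at each step is affordable because $L\le m+t=\Oh{m}$ (as you note, $S[q-1..q+m-1]$ already has few occurrences) and the $\Oh{\log\log n}$-bit symbols can be processed $\Oh{w/\log\log n}$ at a time: prepending $0.5$ and incrementing every symbol exceeding $B[i]$ costs $\Oh{1+m\log\log n/w}$ per step by word parallelism, hence $\Oh{m}$ over all $\sample$ steps once $m=\Omega(\sample)$. So the architecture you chose is the right one, but the two steps you flag as the crux are precisely where the paper substitutes different tools (a wider lookup table and an Mmphf in place of repeated weak prefix searches), and without them your argument does not yield the claimed $\Oh{m}$ bound.
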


Compared to Theorem~\ref{thm:slow}, we improve the query time from $\Oh{m \log^3 n}$ to $\Oh{m}$.
This is achieved by speeding up several steps of the algorithm described in the previous section.

\myparagraph{Speeding up pattern's encoding}
Given a pattern $P[1..m]$, the algorithm has to compute its encoding $E(P[1..m])$.
Doing this na\"{\i}vely as in the previous section would cost $\Oh{m \log m}$ time,
which is, by itself, larger than our target time complexity.
However, since $m$ is polylogarithmic in $n$, we can speed this up as we sped up the computation of the rank-encoding of \(S [i..i + m - 1]\) in the proof of Lemma~\ref{lem:extracting}, and obtain $E(P)$ in $\Oh{m}$ time.
Indeed, we can insert $P$'s characters one after the other in the data structures of \cite{PatrascuT14}
and compute their ranks in constant time.

\myparagraph{Dealing with short patterns}
The approach used by our solution cannot achieve a $o(\sample)$ query time.
This is because we answer a query by performing $\Theta(\sample)$ backward steps regardless of the pattern's length.
This means that for very short patterns, namely $m=o(\sample)=o(\log n/\log \log n)$,
the solution cannot achieve $\Oh{m}$ query time.
However, we can precompute and store the answers of all these short patterns in $o(n)$ bits.
Indeed, the encoding of a pattern of length at most $m=o(\log n/\log \log n)$ is a binary string
of length $o(\log n)$. Thus, there are $o(\sqrt{n})$ possible encodings.
For each of these encodings we explicitly store the number of its occurrence and the position of one of them in $o(n)$ bits.
From now on, thus, we can safely assume that $m=\Omega(\log n/\log \log n)$.

\myparagraph{Speeding up searching phase}
The searching phase of the previous algorithm has two important drawbacks.
First, it costs $\Oh{m\log n}$ time and, thus, it is obviously too expensive
for our target time complexity.
Second, binary searching on the sampled entries in $R$ gives too imprecise results.
Indeed, it finds a range $[l,r]$ of positions in $R$ which may be potential matches
for $P$. However, if the entire range is within two consecutive sampled
positions, we are only guaranteed that all the occurrences of $P$ are in the range
but there may exist positions in the range which do not match $P$.
This uncertainty forces us to explicitly check {\em every} single position in the range
until a match for $P$ is found, if any.
This implies that we have to check $r-l+1=\Oh{\sample}$ positions in the worst case.
Since every check has a cost proportional to $m$, this gives $\omega(m)$ query time.

We use the data structure for weak prefix search of Theorem~\ref{thm:weak}
to index the encodings of all suffixes of the text truncated at length $\ell = \log^c + \log n$.
This way, we can find the range $[l,r]$
of suffixes prefixed by $E(P[1..m])$ in $\Oh{m\log\log n/w + \log (m \log \log n)}=\Oh{m\log\log n/w + \log \log n}$ time with
a data structure of size $\Oh{n\log\log n}$ bits.
This is because $E(P[1..m])$ is drawn from an alphabet of size $\Oh{\log^c n}$,
and both $m$ and $\ell$ are in $\Oh{\log^c n}$.
Apart from its faster query time, this solution has stronger guarantees.
Indeed, if the pattern $P$ has at least one occurrence, the range $[l,r]$
contains all and only the occurrences of $P$.
Instead, if the pattern $P$ does not occur, $[l,r]$ is an arbitrary and meaningless range.
In both cases, just a single check of any position in the range is enough to answer
the order-preserving query. This property gives a $\Oh{\log n/\log\log n}$ factor improvement over the previous solution.

\myparagraph{Speeding up verification phase}
It is clear by the discussion above that the verification phase has to check only
one position in the range $[l,r]$.
If the range contains at least one sampled entry of $R$, we are done.
Otherwise, we have to perform at most $\sample$ backward steps as in the previous
solution.

We now improve the computation of every single backward step. Assume we have
to compute a backward step at $i$, where $R[i]=j$. Before performing the
backward step, we have to compute the encoding $E(S[j-1..j+L[i]-1])$, given
$B[i]$, $L[i]$, and $E(S[j..j+u])$ for some $u\geq L[i]$. This is done as
follows. We first prepend $0.5$ to $E(S[j..j+u])$ and take its prefix of
length $L[i]$. Then, we increase by one every value in the prefix which is
larger than $B[i]$. These operations can be done in $\Oh{1+ L[i]\log \log n /w}=
\Oh{1+m\log \log n /w}$ time by exploiting word parallelism of the RAM model.
Indeed, we can operate on $O(w/\log\log n)$ symbols of the encoding in
parallel.

Now the backward step at $i$ is $i' = k + D[i]$, where $k$ is the only sampled entry in $R$
whose encoding is prefixed by $E(S[j-1..j+L[i]-1])$. Notice that there cannot be
more than one otherwise $S[j-1..j+L[i]-1]$ would occur more than $\sample$ times,
which was excluded in the construction.

Thus, the problem is to compute $k$, given $i$ and $E(S[j-1..j+L[i]-1])$.
It is crucial to observe that $E(S[j-1..j+L[i]-1])$ depends only on $S$ and $L[i]$ and not on the pattern
$P$ we are searching for. Thus, there exists just one valid $E(S[j-1..j+L[i]-1])$
that could be used at query time for a backward step at $i$. Notice that, if the pattern
$P$ does not occur, the encoding that will be used at $i$ may be different, but in this case it is not necessary
to compute a correct backward step.
Consider the set ${\cal E}$ all these, at most $n$, encodings.
The goal is to map each encoding in ${\cal E}$ to the sampled entry in $R$ that it prefixes.
This can be done as follows.
We build a monotone minimal perfect hash function $h()$ on ${\cal E}$ to map each encoding to its lexicographic rank.
Obviously, the encodings that prefix a certain sampled entry $i$ in $R$ form a consecutive range
in the lexicographic ordering. Moreover, none of these ranges overlaps because each encoding
prefixes exactly one sampled entry.
Thus, we can use a binary vector $B$ to mark each of these ranges, so that, given the lexicographic rank of
an encoding, we can infer the sampled entry it prefixes.
The binary vector is obtained by processing the sampled entries in $R$ in lexicographic order and
by writing the size of its range in unary.
It is easy to see that the sampled entry prefixed by $x=E(S[j-1..j+L[i]-1])$ can be
computed as $\rank_1(h(x))$ in constant time.
The data structures that stores $B$ and supports $\rank$ requires $\Oh{n}$ bits (see Theorem \ref{thn:rank}).

The evaluation of $h()$ is the dominant cost, and, thus, a backward step is
computed in $\Oh{1+m\log\log n/w}$ time.
The overall space usage of this solution is $\Oh{n\log\log n}$ bits,
because $B$ has at most $2n$ bits and $h()$ requires $\Oh{n\log\log n}$ bits
by Theorem \ref{thn:monotone}.


Since we perform at most $\sample$ backward steps, it follows that the overall query time is
$\Oh{\sample\times(1+m\log\log n/w} = \Oh{m}$.
The equality follows by observing that $\sample=\Oh{\log n/\log\log n}$, $m = \Omega(\log n/\log\log n)$ and $w=\Omega(\log n)$.

We finally observe that we could use the weak prefix search data structure
instead of $h()$ to compute a backward step. However, this would introduce a
term $\Oh{\log n}$ in the query time, which would be dominant for short
patterns, i.e., $m=o(\log n)$. 

\myparagraph{Query algorithm}
We report here the query algorithm for a pattern $P[1..m]$, with $m=\Omega(\lg n/ \lg \lg n)$.
Recall that for shorter patterns we store all possible answers.

We first compute $E(P[1..m])$ in $\Oh{1+m\log\log n/w}$ time. Then, we perform
a weak prefix search to identify the range $[l,r]$ of encodings that are prefixed by $E(P[1..m])$
in $\Oh{m\log\log n/w + \log \log n}$ time.
If $P$ has at least one occurrence, the search is guaranteed to find the correct range;
otherwise, the range may be arbitrary but the subsequent check will identify the mistake
and report zero occurrences.

In the checking phase, there are only two possible cases.

The first case occurs when $[l,r]$ contains a sampled entry, say $i$, in $R$.
Thus, we can use the encoding from Lemma~\ref{lem:extracting} to compare $E(S[R[i]..R[i]+m])$ and $E(P[1..m])$ in $\Oh{m}$ time.
If they are equal, we report $R[i]$; otherwise, we are guaranteed that there is no occurrence of $P$ in $S$.

The second case is when there is no sampled entry in $[l,r]$. We arbitrarily
select an index $i \in [l,r]$ and we perform a sequence of backward steps starting from $i$.
If $P$ has at least one occurrence, we are guaranteed to find a sampled entry
$e$ in at most $\sample$ backward steps. The overall time of these backward steps is
$\Oh{\sample \times m\log \log n /w}=\Oh{m}$.
If $e$ is not found, we conclude that $P$ has no occurrence.
Otherwise, we explicitly compare $E(S[R[e]+b..R[e]+m+b])$ and $E(P[1..m])$ in $\Oh{m}$ time, where
$b$ is the number of performed backward steps.
We report $R[e]+b$ only in case of a successful comparison.
Note that if $P$ occurs, then the number of its occurrences is $r-l+1$.

\section{Space Lower Bound} \label{sec:lowerbound}
In this section we prove that our solution is space optimal.
This is done by showing a lower bound on the space that any data structure must use
to solve the easier problem of just establishing if a given pattern $P$ has at least
one order-preserving occurrence in $S$.

More precisely, in this section we prove the following theorem.
\begin{theorem}
  Any encoding data structure
  that indexes any $S[1..n]$ over the alphabet $[\sigma]$ with $\log \sigma = \Omega(\log\log n)$
  which, given a pattern $P[1..m]$ with $m=\lg n$, establishes if $P$ has any order-preserving occurrence in $S$
  must use $\Omega(\log\log n)$ bits of space.
\end{theorem}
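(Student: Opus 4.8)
The plan is to use the standard information-theoretic (pigeonhole) argument for encoding lower bounds. Since the data structure must answer, for every pattern $P$ of length $m=\lg n$, whether $P$ has an order-preserving occurrence in $S$, any two texts $S\ne S'$ for which some length-$m$ pattern occurs in one but not the other must be given different encodings. Hence it suffices to exhibit a family $\mathcal{F}$ of texts over $[\sigma]$, pairwise distinguishable in this sense, with $\log_2|\mathcal{F}|=\Omega(n\log\log n)$; the worst-case number of bits is then at least $\log_2|\mathcal{F}|$. This in fact yields the stronger bound $\Omega(n\log\log n)$, which matches the upper bound of Theorem~\ref{thm:fast} and shows our encoding is space-optimal up to constant factors.

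The construction is block-based. Fix $k=\lg n$ and (padding if necessary) view each text as a concatenation of $n/k$ blocks of length $k$, each of the shape $\langle 1,\ \sigma,\ c_1,\ldots,c_{k-2}\rangle$, where the first two entries are a global-minimum and a global-maximum marker and the \emph{interior} $c_1\cdots c_{k-2}$ uses only values in $\{2,\ldots,\sigma-1\}$. Call an order type of a length-$k$ sequence \emph{framed} if its first entry is the unique minimum and its second is the unique maximum; the framed order types are exactly those realized by such blocks, and they correspond bijectively to the order types realizable by the interior. Here the hypothesis $\log\sigma=\Omega(\log\log n)$ is used precisely: a short calculation (splitting on whether $\sigma$ exceeds $\lg n$, and in the latter case restricting to $\lg n$ of the symbols) shows the interior --- about $\lg n$ positions, each worth $\Omega(\log\log n)$ bits of relative order --- realizes at least $2^{\Omega(\lg n\log\log n)}$ distinct order types. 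Since $2^{\Omega(\lg n\log\log n)}\gg n/k$, I can partition these framed order types into $n/k$ pairwise-disjoint families $\mathcal{T}_1,\ldots,\mathcal{T}_{n/k}$, each still of size $2^{\Omega(\lg n\log\log n)}$. Let $\mathcal{F}$ be the set of texts in which, for every $j$, block $j$ realizes some order type from $\mathcal{T}_j$; then $|\mathcal{F}|=\prod_j|\mathcal{T}_j|=2^{\Omega((\lg n\log\log n)\cdot n/k)}=2^{\Omega(n\log\log n)}$, and each member is a string over $[\sigma]$.

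It remains to show members of $\mathcal{F}$ are pairwise distinguishable. If $S\ne S'$ in $\mathcal{F}$, they differ in some block $j$; let $P$ be the framed order type of block $j$ of $S$, a pattern of length exactly $m=\lg n$. Clearly $P$ occurs in $S$. To see $P$ does not occur in $S'$, split the candidate length-$m$ windows of $S'$ into block-aligned and non-aligned ones. A block-aligned window is some block $j'$ of $S'$, with order type in $\mathcal{T}_{j'}$: if $j'\ne j$ this is disjoint from $\mathcal{T}_j\ni P$, and if $j'=j$ it lies in $\mathcal{T}_j$ but differs from $P$ since the two texts differ in block $j$; so no block-aligned window equals $P$. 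A non-aligned window $W$ straddles two consecutive blocks, covering a suffix $B[a..k]$ and a prefix $B'[1..a-1]$ with $2\le a\le k$: if $a=2$ then $W[1]=B[2]=\sigma$ is the unique \emph{maximum} of $W$, hence not its minimum; if $a\ge 3$ then $W$ contains both $B'[1]=1$ and $B'[2]=\sigma$ while $W[1]=B[a]$ is an interior value strictly between them, hence again not the minimum of $W$. In either case $W$ is not framed, so $W\ne P$. Therefore $P$ occurs in $S$ but not in $S'$, and the two receive different encodings, completing the counting argument.

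The main obstacle is the case analysis for the non-block-aligned windows: the marker discipline $\langle 1,\sigma,\ldots\rangle$ opening every block is exactly what forces a boundary-straddling window never to be a framed pattern, and this must be verified for every offset $2\le a\le k$ (the windows at the two ends of the text are either block-aligned or do not exist, so they cause no trouble). The remaining ingredients --- that $\log\sigma=\Omega(\log\log n)$ makes each block interior carry $\Omega(\lg n\log\log n)$ distinguishable bits, and that this quantity dominates $n/k$ so the disjoint partition into the $\mathcal{T}_j$ exists --- are routine estimates.
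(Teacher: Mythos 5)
Your proof is correct in substance but takes a genuinely different route from the paper's. The paper argues by compression: assuming an $o(n\log\log n)$-bit structure ${\sf D}$, it splits $S$ into blocks of length $\lg n$, stores each block's character multiset with Elias--Fano ($n\lg\frac{\sigma}{\lg n}+\Oh{n}$ bits) plus a minimal-perfect-hash value of each block's rank encoding ($\Oh{n}$ bits), then recovers each block's order type by querying ${\sf D}$ on every possible length-$\lg n$ encoding and matching hashes; the total would fall below the $n\lg\sigma$ entropy bound, a contradiction. You instead give a direct counting argument: an explicit family of $2^{\Omega(n\log\log n)}$ texts, pairwise separated by some length-$\lg n$ query, built from blocks framed by a $1,\sigma$ min/max prefix, with pairwise-disjoint pools $\mathcal{T}_j$ of order types per block (so a pattern drawn from one block can never match block-aligned elsewhere) and the framing discipline killing every boundary-straddling window. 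Both arguments rest on the same block decomposition and the same entropy estimate ($\Omega(\lg n\log\log n)$ order types per block when $\log\sigma=\Omega(\log\log n)$); yours is self-contained and exhibits an explicit hard family, while the paper's shows the stronger fact that ${\sf D}$ can be augmented to reconstruct \emph{any} $S$. One wrinkle to repair: as written, ``the set of texts in which block $j$ realizes some type from $\mathcal{T}_j$'' does not have cardinality $\prod_j|\mathcal{T}_j|$, and two such texts can differ in block $j$ as strings while having the \emph{same} order type there, which breaks the $j'=j$ case of your separation argument. Define $\mathcal{F}$ instead by fixing one canonical realization per tuple in $\mathcal{T}_1\times\cdots\times\mathcal{T}_{n/k}$; then distinct members have distinct order-type tuples, and both the cardinality count and the distinguishability argument go through unchanged.
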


By contradiction, we assume that there exists a data structure ${\sf D}$ that
uses $o(n \log \log n)$ bits.
We prove that this implies that we can store any string $S[1,n]$ in less than $n \lg \sigma$ bits, which is clearly impossible.

We start by splitting $S$ into $n/m$ blocks of size $m = \lg n$ characters each.
Let $B_i$ denote the $i$th block in this partition.
Observe that if we know both the list $L(B_i)$ of characters that occur in $B_i$ together with their
number of occurrences and $E(B_i)$, we can recover $B_i$. This is because $E(B_i)$ implicitly tells us how to permute the characters
in $L(B_i)$ to obtain $B_i$. Obviously, if we are able to reconstruct each $B_i$, we can reconstruct $S$.
Thus, our goal is to use ${\sf D}$ together with additional data structures to
obtain $E(B_i)$ and $L(B_i)$, for any $B_i$.

We first directly encode $L(B_i)$ for each $i$ by encoding the sorted sequence of characters with Elias-Fano representation. By Theorem \ref{thn:ef}, we know that
this requires $m\lg \frac{\sigma}{m} + \Oh{m}$ bits. Summing up over all the blocks,
the overall space used is $n \lg \frac{\sigma}{m} + \Oh{n}$ bits.

Now it remains to obtain the encodings of all the blocks. Consider the set
${\cal E}$ of the encodings of all the substrings of $S$ of length $m$. We do
not store ${\cal E}$ because it would require too much space. Instead, we use
a minimal perfect hash function $h()$ on ${\cal E}$. This requires $\Oh{n}$
bits by Theorem \ref{thn:mphf}. This way each distinct encoding is
bijectively mapped to a value in $[n]$. For each block $B_i$, we store
$h(B_i)$. This way, we are keeping track of those elements in ${\cal E}$ that
are blocks and their positions in $S$. This requires $\Oh{n}$ bits, because
there are $n/\lg n$ blocks and storing each value needs $\Oh{\log n}$ bits.

We are now ready to retrieve the encoding of all the blocks, which is the last step to
be able to reconstruct $S$.
This is done by searching in $D$ for every possible encoding of exactly $m$ characters.
The data structure will be able to tell us the ones that occurs in $S$, i.e., we are
retrieving the entire set ${\cal E}$.
For each encoding $e \in {\cal E}$, we check if $h(e)$ is the hash of any of the blocks.
In this way we are able to associate the encodings in ${\cal E}$ to the original block.

Thus, we are able to reconstruct $S$ by using ${\sf D}$ and additional data structures which uses
$n \lg \sigma - n\lg\lg n + \Oh{n}$ bits of space.
This implies that ${\sf D}$ cannot use $o(n\log\log n)$ bits.

\section{Conclusion}

We have given an encoding data structure for order-preserving pattern matching:
given a string $S$ of length $n$ over an arbitrary alphabet and a constant
\(c \geq 1\), we can store $\Oh{n \log \log n}$ bits such that later,
given a pattern $P$ of length \(m \leq \lg^c n\), in $\Oh{m}$ time we
can return the position of an order-preserving match of $P$ in $S$ (if one exists)
and report the number of such matches.  Our space bound is within a constant factor
of optimal, even for only detecting whether a match exists,
and our time bound is optimal when the alphabet size is at least logarithmic in $n$.
We can build our encoding knowing only how each character of $S$ compares to
$\Oh{\lg^c n}$ neighbouring characters.
We believe our results will help open up a new line of research,
where space is saved by restricting the set of possible queries
or by relaxing the acceptable answers,
that will help us deal with the rapid growth of datasets.

\newpage

\bibliographystyle{plainurl}
\bibliography{op-encoding}

\end{document}